\documentclass[twocolumn,10pt]{IEEEtran}
\usepackage{caption}
\usepackage{subcaption}
\usepackage{cite}
\usepackage{amsmath,amssymb,amsfonts,amsthm}
\usepackage{graphicx}
\usepackage{algorithm,algorithmic}
\usepackage{color}
\def\BibTeX{{\rm B\kern-.05em{\sc i\kern-.025em b}\kern-.08em
    T\kern-.1667em\lower.7ex\hbox{E}\kern-.125emX}}
\markboth{{\it draft}}
{Author \MakeLowercase{\textit{et al.}}: Title}

\newtheorem{definition}{Definition}[section]
\newtheorem{problem}{Problem}[section]
\newtheorem{assumption}{Assumption}[section]
\newtheorem{theorem}{Theorem}[section]
\newtheorem{lemma}{Lemma}[section]

\newtheorem{remark}{Remark}[section]
\DeclareMathOperator{\image}{image}
\DeclareMathOperator{\rank}{rank}
\DeclareMathOperator{\tr}{tr}
\DeclareMathOperator{\sgn}{sgn}
\DeclareMathOperator{\diag}{diag}

\newcommand{\nullspace}{\operatorname{null}}

\usepackage{tikz}
\usetikzlibrary{arrows,shapes,chains,calc}
\usetikzlibrary{arrows.meta}
\usepackage{tikz-3dplot}
\usepackage{tikz}

\newenvironment{apxproof}{%
  \par
  \normalfont
  \trivlist
  \item[]\hspace*{2pt} 
}{%
  \hfill $\blacksquare$ 
  \endtrivlist
}

\makeatletter

\newcommand{\Rmnum}[1]{\expandafter\@slowromancap\romannumeral #1@}
\makeatother

\newif\ifshowmarkup
\showmarkupfalse

\ifshowmarkup
    \newcommand{\HT}[1]{{\color{blue}#1}}
\else
    \newcommand{\HT}[1]{#1}
\fi

\begin{document}
\title{Distributed Non-Uniform Scaling Control of Multi-Agent Formation via Matrix-Valued Constraints}
\author{Tao He and Gangshan Jing
\thanks{This work was supported in part by the National Key Research and Development Program of China under Grant 2025YFA1018800 and in part by the National Natural Science Foundation of China under Grant 62573068 and Grant 62533002. {\it (Corresponding author: Gangshan Jing)}}
\thanks{Tao He and Gangshan Jing are with Chongqing University, Chongqing, 400044, PRC.  (e-mail:20231301010@stu.cqu.edu.cn; jinggangshan@cqu.edu.cn). }}

\maketitle

\begin{abstract}
Distributed formation maneuver control refers to the problem of maneuvering a group of agents to change their formation shape by adjusting the motions of partial agents, where the controller of each agent only requires local information measured from its neighbors. Although this problem has been extensively investigated, existing approaches are mostly limited to uniform scaling transformations. This article proposes a new type of local matrix-valued constraints, via which non-uniform scaling control of position formation can be achieved by tuning the positions of only two agents (i.e., leaders). Here, the non-uniform scaling transformation refers to global scaling the position formation with different ratios along different orthogonal coordinate directions. Moreover, by defining scaling and translation of attitudes, we propose a distributed control scheme for scaling and translation maneuver control of joint position-attitude formations. It is proven that the proposed controller achieves global convergence, provided that the sensing graph among agents is a 2-rooted bidirectional graph. Compared with the affine formation maneuver control approach, the proposed approach leverages a sparser sensing graph, requires fewer leaders, and additionally enables scaling transformations of the attitude formation. A simulation example demonstrates our theoretical results.



\end{abstract}

\begin{IEEEkeywords}
Non-uniform scaling, matrix-valued constraint, 2-rooted graph, formation control, multi-agent systems.
\end{IEEEkeywords}

\section{Introduction}
Formation maneuver control enables a group of agents to operate as a cohesive unit, with maneuverability defined as the degree to which the formation's positional (centroid, scale, and other geometric parameters) and attitudinal (orientation) characteristics can be continuously adjusted while maintaining coordinated motion. This capability is essential for applications such as search and rescue \cite{Zhou2022, Quan2023}, cooperative transport \cite{Liu2024}, cooperative localization \cite{Chen2025}, and collaborative manipulation \cite{Culbertson2021}. However, dynamic and complex environments, such as obstacle-dense or high-interference scenarios, pose significant challenges to formation maneuverability.


\begin{figure}[t]
	\centering	
	\includegraphics[width=3in]{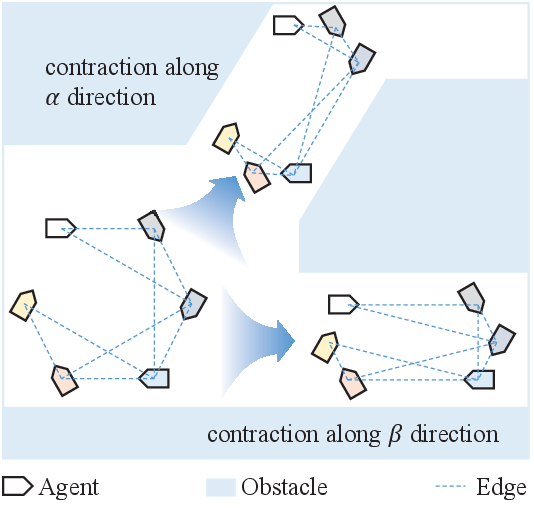}   	
	\caption{Non-uniform scaling transformation along arbitrary direction of the formation under sensing constraints in an obstacle-cluttered environment.}     
	\label{fig:fignustOA}      
\end{figure}



The maneuverability of a multi-agent position formation is fundamentally constrained by the types of local inter-agent constraints that characterize the overall formation geometry. Prior studies have demonstrated translational maneuvers via displacement-based consensus methods \cite{DeMarina2021, Romero2024}, and rotational maneuvers through inter-agent distance constraints \cite{Asimow1978, HeXiaodong2024, Vu2024}. \HT{Compared to rigid transformations only, geometrically scalable formations\footnote{\HT{This geometric scalability is distinct from the
scalability of the control algorithm with respect to the number of
agents.}}—achievable via bearing-based \cite{Trinh2020, Erskine2024}, angle-based \cite{Jing2019, Chen2023}, distance-ratio-based \cite{Cao2020}, complex-Laplacian-based \cite{Lin2014, Fang2024}, and clique-based \cite{HeGen2025} control strategies—enable uniform scaling (changing the overall formation size while preserving shape), thus significantly improve formation maneuverability.}





However, existing uniform scaling control methods are often inefficient in anisotropic settings. \HT{For example, in elongated corridors \cite{Quan2023}, uniform scaling may require unnecessary reduction along unconstrained directions, while non-uniform scaling\footnote{\HT{Here, ``non-uniform scaling'' means a single, global anisotropic scaling transformation applied identically to all agents, not multiple independent uniform scalings applied to different subsets, see Definition~\ref{de:position_transformation}.}} enables selective compression along the constrained axis, better accommodating spatial or hardware constraints.} In dynamic environments \cite{Alonso2017}, non-uniform scaling further improves responsiveness by reducing superfluous transformations, which are often time-consuming. Although affine formation control \cite{Arranz2014, Lin2016, Zhao2018, Aranda2025} theoretically enables non-uniform scaling, existing methods face challenges due to complex sensing graph structures and the high computational cost of centralized optimization over constraint matrices. As a result, non-uniform scaling transformations in formation control remain relatively underexplored.

On the other hand, attitude formation control introduces additional complexity to maneuverability. Most existing approaches either seek full heading consensus, aligning all agents to a common orientation for simplified coordination \cite{KwangKyo2014}, or aim to maintain fixed relative attitudes to preserve structured formation patterns with constant orientation differences \cite{Dimarogonas2009, Song2017, HeXiaodong2024}. These approaches rarely account for the coupled nature of position and attitude in practical scenarios, and overlook scaling transformations of attitude that could significantly enhance the formation's agility and adaptability.



To address the above-mentioned challenges, we investigate distributed strategies for non-uniform scaling of formations, as demonstrated in Fig. \ref{fig:fignustOA}. The main contributions are as follows.


\begin{enumerate}
    \item To ensure that all maneuver parameters are effective and the follower state are uniquely determined by the leader state, we introduce the concept of maximum maneuverability, and establish the necessary and sufficient graphical conditions under which a formation achieves maximum maneuverability within the leader–follower framework; see Section \ref{Maneuverability}.

    \item \HT{We design a local linear constraint and construct a matrix-valued Laplacian to characterize the target formation. An efficient method for computing the corresponding stabilizing matrix is developed (see Lemma~\ref{lem:PathG} and Theorem~\ref{the_dsm}). In contrast to existing approaches \cite{Lin2014, Zhao2018, ZHANG2025}, where the stabilizing matrix is computed for the entire formation, our method enables decentralized computation over individual dual-entry paths (formally defined in Definition \ref{def:dual_entry_path}).}

    \item \HT{We propose a distributed non-uniform scaling maneuver control law for the joint position-attitude formation. Under a 2-rooted graph structure, we guarantee global convergence of the closed-loop system (see Theorems~\ref{th convergence} and~\ref{the_dd}). In contrast to affine formation control \cite{Zhao2018, Zhao2024} (which achieves non-uniform scaling but requires a $3$-rooted graph in $\mathbb{R}^2$) and to geometrically scalable formations \cite{Lin2014, Jing2019, Cao2020, Trinh2020, Chen2023, Erskine2024, Fang2024, HeGen2025} (which are limited to uniform scaling only), our approach enables non-uniform scaling using only two leaders while requiring each follower to have merely two neighbors. Crucially, we also introduce scaling of attitudes, a capability previously unexplored in joint position–attitude formation control \cite{HeXiaodong2024, Meng2025}.}

        
\end{enumerate}

The paper is structured as follows. Section \ref{Preliminaries} introduces the preliminaries. Section \ref{ProblemStatement} formulates the formation maneuver control problem. Section \ref{Maneuverability} analyzes maximum maneuverability and its conditions. Section \ref{Control} presents the controller. Section \ref{Simulation} provides numerical validation. Section \ref{sec:discussion} discusses practical aspects, and Section \ref{Conclusion} concludes the paper.

\section{Preliminaries}\label{Preliminaries}
\subsection{Notations}
Throughout this paper, $\mathbb{R}$ denotes the set of real numbers, $\mathbb{R}^d$ the $d$-dimensional Euclidean space, $\dim(\cdot)$ the dimension of a linear space, and $|\cdot|$ the cardinality of a set or the element-wise absolute value for a scalar, vector, or matrix. Let $\nullspace(\cdot)$, $\image(\cdot)$, $\tr(\cdot)$, $\det(\cdot)$ and $\rank(\cdot)$ denote the null space, image space, trace, determinant and rank of a matrix, respectively. The identity matrix is $I_n \in \mathbb{R}^{n \times n}$, $1_n \in \mathbb{R}^n$ the all-ones vector, $0$ a zero tensor (scalar/vector/matrix) with context-appropriate dimensions, and $\otimes$ the Kronecker product.

For $x \in \mathbb{R}^d$, $\diag(x)$ denotes its diagonal matrix. For matrices $A_i$, $i=1,\dots,m$, $\diag(A_1,\dots,A_m)$ or $\diag(A_i)$ denotes the block-diagonal matrix; if all $A_i=A$, we write $\diag(A)$. The Euclidean norm is $\|\cdot\|$, while $\wedge$, $\Rightarrow$, and $\Longleftrightarrow$ denote logical conjunction, implication, and equivalence, respectively. $R(\theta) = \begin{bmatrix} \cos \theta & -\sin \theta \\ \sin \theta & \cos \theta \end{bmatrix}$ is a rotation matrix.

\subsection{Graph Theory}\label{subsec:graph_theory}
Consider a graph $G = (V, E)$ representing a multi-agent system, where the vertex set $V = \{1, \dots, n\} $ denotes the set of agents and the edge set $ E \subseteq \{(i, k) : i, k \in V \text{ and } i \neq k\} $ captures sensing relationships. Each directed edge $(i, k) \in E$ indicates that agent $k$ can measure information from agent $i$ and there is a directed edge from vertex $k$ to vertex $i$. We refer to $G$ as a sensing graph since its edges explicitly encode the directional sensing relationships between agents. The neighbor set of agent $k$ is $N_k= \{ i \in V : (i, k) \in E \}$.

\HT{A bidirectional path from agent $i_1$ to agent $i_k$ in a directed graph $G$ is a sequence of distinct agents $i_1, i_2, \dots, i_k$ such that $(i_l, i_{l+1}) \in E$ and $(i_{l+1}, i_l) \in E$ for all $l = 1, \dots, k-1$. Agents $i_1$ and $i_k$ are called the end agents, while any intermediate agents are termed inner agents. The graph $G$ is called bidirectional if, for every edge $(i,j) \in E$, its reverse edge $(j,i)$ also belongs to $E$.}


\HT{A matrix $M=[M_{ki}]\in\mathbb{R}^{nd\times nd}$ is called a matrix-valued Laplacian if $\sum_{i=1}^nM_{ki}=0$ for $k\in V$, where each block $M_{ki}=0$ if $(i,k)\notin E$. Unlike the scalar‑weighted graph Laplacian, here $M_{ki}$ is a $d\times d$ matrix whose entries can be positive, negative, or zero. Note that graph $G$ is not undirected, which implies that, in general, $M_{ki} \neq M_{ik}^{\top}$ for $i\neq k$.}





\begin{definition}[\cite{Lin2014}]
    For a bidirectional graph $G$, an agent $i$ is said to be 2-reachable from a non-singleton set $U$ of agents if there exists a bidirectional path from an agent in $U$ to agent $i$ after removing any agent except agent $i$.
\end{definition}
\begin{definition}[\cite{Lin2014}] \label{def:2-rooted}
    A bidirectional graph $G$ is said to be 2-rooted if there exists a set of two agents (called roots), from which every other agent is 2-reachable.
\end{definition}
\HT{
\begin{definition}[Dual-Entry Path] \label{def:dual_entry_path}
    A dual-entry path (DEP) is a graph $G_\mathcal{P} = (V_\mathcal{P}, E_\mathcal{P})$ comprising two distinct entry agents $i, j$ and an ordered sequence of $\ell \geq 1$ inner agents $1, \dots, \ell$ forming a bidirectional path, such that:

    \begin{itemize}
        \item If $\ell = 1$, then $V_\mathcal{P} = \{i, j, 1\}$ and $E_\mathcal{P} = \{(i, 1), (j, 1)\}$;
        \item If $\ell \geq 2$, then $V_\mathcal{P} = \{i, j\} \cup \{1, \dots, \ell\}$ and $E_\mathcal{P} = \{(i, 1), (j, \ell)\} \cup \{(k, k+1), (k+1, k) : 1 \leq k < \ell\}$.
    \end{itemize}
\end{definition}}
\begin{figure}[t!hbp]
	\centering	
	\includegraphics[width=3in]{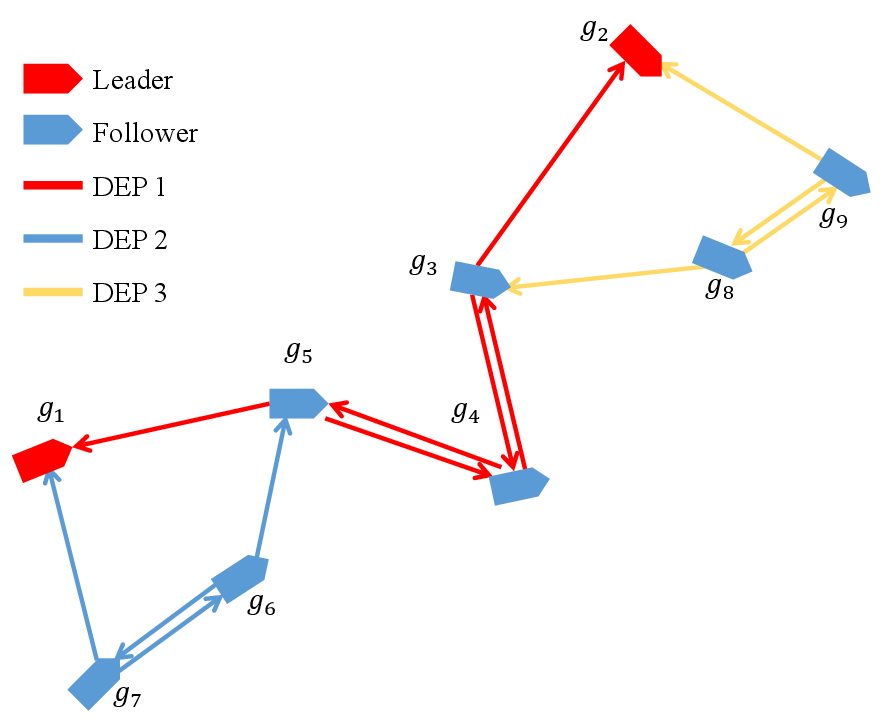}   
    \caption{\HT{DEP-induced graph formed by attaching DEPs~1--3 sequentially, illustrating hierarchical dependencies and the two-neighbor-per-follower structure. In DEP~3, agent~9 senses~2 and~8 ($9  \rightarrow 2$, $9  \rightarrow 8$, i.e., $(2,9)$, $(8,9) \in E$), agent~8 senses~3 and~9 ($8  \rightarrow 3$, $8  \rightarrow 9$, i.e., $(3,8)$, $(9,8) \in E$), but not vice versa ($3 \nrightarrow 8$, $2 \nrightarrow 9$). Hence, DEP~3 depends unidirectionally on DEP~1 via entry agents~2 and~3.}}  
        
	\label{fig_mdep}      
\end{figure}
\begin{definition}[DEP-Induced Graph] \label{def:multi_dual_entry_graph}
	Let $\mathcal{L}_0 = (V_0, E_0)$ be the graph with agent set $V_0 = \{1,2\}$ and edge set $E_0 = \emptyset$. For $ h = 1, \dots, \kappa $, define $ \mathcal{L}_h = (V_h, E_h) $ by attaching a DEP $ G_{\mathcal{P}_h} = (V_{\mathcal{P}_h}, E_{\mathcal{P}_h}) $ to $ \mathcal{L}_{h-1} = (V_{h-1}, E_{h-1}) $ via distinct entry agents $ i_h, j_h \in V_{h-1} $, where $V_{\mathcal{P}_h} \cap V_{h-1} = \{i_h, j_h\}$, the inner agents of $G_{\mathcal{P}_h}$ are labeled as $\{|V_{h-1}| + 1, \dots, |V_{h-1}| + \ell_h\}$, and $V_h = V_{h-1} \cup V_{\mathcal{P}_h}$, $E_h = E_{h-1} \cup E_{\mathcal{P}_h}$.
\end{definition}

\HT{Unlike the 2‑rooted graph in complex‑Laplacian‑based formation control \cite{Lin2014}, the proposed DEP‑induced graph 
is generated by sequentially attaching dual‑entry paths (Definition \ref{def:dual_entry_path}) to two entry agents from the existing graph. Since entry agents do not sense inner agents, mutual dependencies between adjacent DEPs—present in bidirectional paths—are eliminated; moreover, the construction ensures each follower senses exactly two neighbors.

An example of DEP-induced graph is given in Fig. \ref{fig_mdep}. The following lemma formally establishes a connection between 2-rooted graphs and the DEP-induced graph.}

\begin{lemma} \label{lem:mp2rg_equivalence}
	A bidirectional graph $G$ is 2-rooted if and only if it contains a DEP-induced graph as its spanning subgraph.
\end{lemma}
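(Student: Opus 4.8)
The statement to prove is Lemma \ref{lem:mp2rg_equivalence}: a bidirectional graph $G$ is 2-rooted iff it contains a DEP-induced graph as a spanning subgraph. I will prove the two directions separately, leaning on the fact (Definition \ref{def:2-rooted}) that 2-rootedness means every agent outside a distinguished pair $\{r_1,r_2\}$ is 2-reachable from that pair, i.e.\ survives the deletion of any single other vertex and still has a bidirectional path from $\{r_1,r_2\}$.

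**Sufficiency (DEP-induced spanning subgraph $\Rightarrow$ 2-rooted).** I would take the root set to be $V_0=\{1,2\}$ and argue by induction on the number $\kappa$ of attached DEPs that in $\mathcal{L}_h$ every vertex of $V_h\setminus\{1,2\}$ is 2-reachable from $\{1,2\}$. The base case $\mathcal{L}_0$ is vacuous. For the inductive step, suppose $\mathcal{L}_{h-1}$ is 2-rooted at $\{1,2\}$ and we attach a DEP with entry agents $i_h,j_h$ and inner chain $1,\dots,\ell_h$. Given any vertex $v$ to be checked and any single vertex $w\neq v$ to be removed: if $v\in V_{h-1}$, the path witnessing its 2-reachability in $\mathcal{L}_{h-1}$ still works (deleting an inner agent of the new DEP cannot touch a path inside $\mathcal{L}_{h-1}$), so I only need $w$ possibly being $i_h$ or $j_h$, which is handled by the inductive hypothesis since $i_h,j_h$ are just ordinary vertices of $V_{h-1}$. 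If $v$ is an inner agent of the new DEP, the two edge-disjoint "halves" of the path — one running from $i_h$ inward to $v$, the other from $j_h$ inward to $v$ — together with bidirectional paths from $\{1,2\}$ to $i_h$ and to $j_h$ in $\mathcal{L}_{h-1}$ (which exist and, by Menger-type reasoning on the 2-rooted structure, can be chosen to avoid any single deleted vertex), give a surviving route. Because $G$ contains such a $\mathcal{L}_\kappa$ as a spanning subgraph, adding the remaining edges of $G$ only helps, so $G$ is 2-rooted. Since the root set must be fixed in advance, I would note $G$ is 2-rooted with the specific root pair $\{1,2\}$.

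**Necessity (2-rooted $\Rightarrow$ DEP-induced spanning subgraph).** This is the harder direction. Let $\{r_1,r_2\}$ be a root pair. I would build the DEP-induced spanning subgraph greedily: start with $\mathcal{L}_0=(\{r_1,r_2\},\emptyset)$, and at each stage, pick a vertex $u$ not yet in the current $\mathcal{L}_{h-1}$. Using 2-reachability of $u$, I want to extract \emph{two internally vertex-disjoint} bidirectional paths from $\{r_1,r_2\}$ to $u$ — more precisely, to realize $u$ together with a fresh chain of so-far-unused inner vertices as a DEP attached at two vertices $i_h,j_h$ already in $\mathcal{L}_{h-1}$. The mechanism: since deleting any single vertex still leaves a bidirectional $\{r_1,r_2\}$–$u$ path, a Menger-style argument gives two bidirectional paths from the root pair to $u$ that are vertex-disjoint except at $u$; truncate each path at the first point where it re-enters the current $V_{h-1}$, obtaining entry agents $i_h,j_h$ and two inner sub-chains meeting only at $u$ — this is exactly the DEP shape of Definition \ref{def:dual_entry_path} (with $u$ playing the role of the "meeting" inner vertex $\ell$, or collapsing to the $\ell=1$ case if both truncated paths hit $u$ immediately). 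Repeat until all vertices are absorbed; ordering inner labels consistently with Definition \ref{def:multi_dual_entry_graph} is routine bookkeeping. The resulting subgraph is spanning, and each attached piece is a genuine DEP, so $G$ contains a DEP-induced spanning subgraph.

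**Main obstacle.** The delicate point is the necessity direction: turning "every single-vertex deletion leaves a bidirectional root-to-$u$ path" into two honest internally-disjoint bidirectional paths, and then truncating them so that they attach to the \emph{already-constructed} part at exactly two distinct entry agents while their inner vertices are all new. One must be careful that the two paths do not re-enter $V_{h-1}$ prematurely at a common vertex (which would force $i_h=j_h$, violating the "distinct entry agents" requirement) — this requires choosing the Menger paths to be disjoint away from $u$ and then arguing the first re-entry points are distinct, using again that $u$ is 2-reachable. I would also need to confirm the process terminates with a spanning subgraph, i.e.\ that at each step some not-yet-absorbed vertex is indeed 2-reachable and hence attachable; this follows because 2-reachability is a property relative to the fixed root pair and is inherited throughout. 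Bidirectionality of $G$ is used throughout to ensure every path and its reverse are available, matching the bidirectional-path requirement in the DEP definition.
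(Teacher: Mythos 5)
Your proposal is correct and follows essentially the same route as the paper: sufficiency by observing that each vertex of an attached DEP inherits two internally disjoint bidirectional paths back to the roots, and necessity by greedily peeling off DEPs, using 2-reachability (via a Menger-type argument) to obtain two internally disjoint paths that first meet the already-built subgraph at two distinct entry agents. You are in fact more explicit than the paper about the delicate points (distinctness of the entry agents, disjointness of the truncated paths), which the paper's proof asserts without elaboration.
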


\begin{proof}
    See Appendix \ref{sec:proofLemmp2rg_equivalence}.
\end{proof}




\section{Problem Formulation}\label{ProblemStatement}
This section introduces the foundational concepts and rigorous formulation of the formation maneuver control problem.

\subsection{Joint Position-Attitude Formation}\label{subsec:jpaf}
Consider a group of $n$ agents in $\mathbb{R}^2$. The dynamics of the $k$-th agent is given by
\begin{equation}\label{equ_dy}
\dot{g}_k=\begin{bmatrix} \dot{p}_k \\
\dot{\phi}_k \end{bmatrix}= \begin{bmatrix} u_k \\
\omega_k \end{bmatrix} ,
\end{equation}
where $g_k=[p^{\top}_k, \phi_k]^{\top} \in \mathbb{R}^3$, $p_k=[p_k^x, p_k^y]^{\top} \in \mathbb{R}^2$, $\phi_k \in \mathbb{R}$, $u_k \in \mathbb{R}^2$, and $\omega_k \in \mathbb{R}$ denote the state, position, yaw angle, linear velocity and yaw rate, respectively, of agent $k$.

\HT{A formation\footnotemark[3] $(G,q)$ consists of a sensing graph $G$ and a stacked vector $q$ denoting the configuration of all agents. By defining $p=[p^{\top}_1, \cdots, p^{\top}_k, \cdots , p^{\top}_n]^{\top}$, $\phi=[\phi_1, \cdots, \phi_k, \cdots , \phi_n]^{\top}$, $g= [g^{\top}_1, \cdots, g^{\top}_k, \cdots , g^{\top}_n]^{\top}$, depending on the states that constitute $q$, we distinguish three types:
\begin{itemize}
	\item If $q=p$, $(G,p)$ is a position formation.
	\item If $q=\phi$, $(G,\phi)$ is an attitude formation.
	\item If $q = g$, $(G,g)$ is a joint position-attitude formation.
\end{itemize}


This paper studies a generalized joint position-attitude formation. Unlike position-only formation \cite{jing2018,Fang2024}, which treats agents as point masses to achieve desired spatial configurations, and attitude-only formation \cite{Zou2012,Wei2018}, which synchronizes orientations for specific directional relationships, the proposed approach enables finer control over the formation's global geometry and internal structure. Furthermore, it extends existing joint formation methods \cite{Wu2014,HeXiaodong2024,Meng2025} by supporting non‑uniform scaling of both positions and attitudes, as detailed in the following sections.

\footnotetext[3]{\HT{In references of formation shape stabilization, e.g., \cite{Anderson2009,Fathian2021}, a formation usually refers to the geometric shape formed by the group of agents, which can be invariant even when agents have different positions. In contrast, the formation in this paper is defined as a framework, which has a one-to-one correspondence with a configuration. Such a definition allows us to distinguish formations with the same shape but different locations or attitudes in formation maneuver control.}}
}

\begin{figure}[htb]
    \centering	
    \includegraphics[width=3.4in]{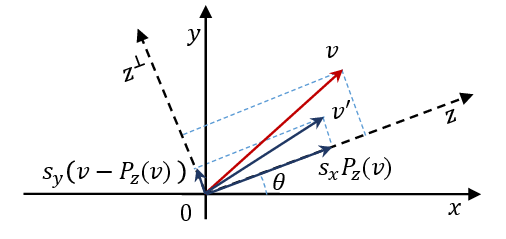}   	
    \caption{Non-uniform scaling transformation in $z=[\cos\theta, \sin\theta]^{\top}$ direction.}     
    \label{fig:fignust}      
\end{figure}
    
\subsection{Non-Uniform Scaling of a Position Formation}
As illustrated in Fig.~\ref{fig:fignust}, we first define a non-uniform scaling transformation for a vector $v \in \mathbb{R}^2$ along an arbitrary direction $z = [\cos\theta, \sin\theta]^\top$, where $\theta$ is called the scaling direction. The transformation applies scaling factors $s_x$ and $s_y$ along the axes aligned with $z$ and $z^{\perp}=R(\frac{\pi}{2})z$, respectively. The transformed vector is then given by:
\begin{equation}
\begin{aligned}
v' &= s_x P_z(v) + s_y (v - P_z(v)) \\
&= \left((s_x - s_y) zz^{\top} + s_y I_2\right) v \\
&= \left(R(\theta)
\begin{bmatrix}
s_x - s_y & 0 \\
0 & 0
\end{bmatrix}
R^{\top}(\theta) + s_y I_2\right) v \\
&= R(\theta)
\begin{bmatrix}
s_x & 0 \\
0 & s_y
\end{bmatrix}
R^{\top}(\theta) v.
\end{aligned}
\nonumber
\end{equation}
Here, $P_z(v) = zz^{\top} v$ denotes the projection of $v$ onto direction $z$. Note that when $s_x=s_y=s$, the transformation reduces to $v' =sv$, which corresponds to a uniform scaling case.

We now extend this concept to the position formation.
\begin{definition}[Non-Uniform Scaling of a Position Formation] \label{de:position_transformation}    
    Given a nominal position formation $(G,\tilde{p})$ in $\mathbb{R}^2$ with configuration $\tilde{p} = [\tilde{p}_1^{\top}, \cdots, \tilde{p}_k^{\top}, \cdots , \tilde{p}_n^{\top}]^{\top} \in \mathbb{R}^{2n}$, its non-uniform scaling transformation associated with scaling direction $\theta$ is defined as:
    \begin{equation} 
    p' = \left(I_n \otimes \left(R(\theta) \diag(s_p) R^{\top}(\theta)\right)\right) \tilde{p},
    \label{eq_nusp}
    \end{equation}
    where $s_p=[s_x, s_y]^{\top} \in \mathbb{R}^2$ is the scaling factor vector.
\end{definition}

This transformation enables continuous modulation of formation shapes along arbitrary directions, providing a foundation for the anisotropic scaling formation maneuver control strategy proposed in this paper. Compared to uniform scaling methods \cite{Lin2014,Zhao2016,jing2018} and fixed scaling approaches \cite{DeMarina2021,Vu2024,Romero2024}, the proposed non-uniform scaling offers superior flexibility in controlling multi-agent formations.

\subsection{Scaling and Translation of an Attitude Formation}\label{subsec:scaling_attitude}
\HT{
Existing attitude formation methods focus on consensus alignment \cite{KwangKyo2014} or fixed relative attitudes \cite{Dimarogonas2009, Song2017}. Although effective for basic coordination, their limited adaptability restricts their capacity to meet dynamic operational requirements. To enhance flexibility, we introduce scaling and translation transformations for attitude formation, enabling continuous modulation of attitude formation geometry.

\begin{definition}[Scaling and Translation of an Attitude Formation] \label{de attitude transformation}
    Given a nominal attitude formation $(G,\tilde{\phi})$ with configuration $\tilde{\phi} = [\tilde{\phi}_1, \cdots, \tilde{\phi}_k, \cdots , \tilde{\phi}_n]^{\top} \in [-\pi, \pi)^n$, its scaling and translation transformation is defined as:
    \begin{equation}
    \phi' = s_\phi \tilde{\phi} + \tau_\phi 1_n,
    \label{eq_sta}
    \end{equation}
    where $s_{\phi} \in \mathbb{R}$ and $\tau_{\phi} \in \mathbb{R}$ are the scaling and translation factors, respectively, and are chosen to ensure $\phi' \in [-\pi, \pi)^n $.
\end{definition}

\begin{figure}[htbp]
	\centering
	\foreach \i/\title/\scale/\trans in {
		1/(a) $s_\phi=1 \  \tau_\phi = 0$ /1/0,
		2/(b) $s_\phi=0 \  \tau_\phi = -\frac{\pi}{4}$/0/-45,
		3/(c) $s_\phi=1.5 \  \tau_\phi = 0$/1.5/0,
		4/(d) $s_\phi=1.5 \  \tau_\phi = -\frac{\pi}{4}$/1.5/-45
	}{
		\begin{subfigure}[b]{0.23\textwidth}
			\centering
			\begin{tikzpicture}[scale=0.9, >=latex]
			\def\n{5}
			\pgfmathsetmacro{\sep}{0.8}
			\pgfmathsetmacro{\startx}{1.2}
			\foreach \j in {0,...,4} {
				\pgfmathsetmacro{\phi}{(\j*9+72}
				\pgfmathsetmacro{\theta}{\scale*\phi+\trans}
				\pgfmathsetmacro{\x}{\startx - \j*\sep}
				\pgfmathsetmacro{\dx}{cos(\theta)*0.6}
				\pgfmathsetmacro{\dy}{sin(\theta)*0.6}
				\draw[->, thick] (\x,0) -- ++(\dx,\dy);
			}
            \draw[->, dashed] (-2.5,0) -- (1.7,0) node[right] {$x$};
			\end{tikzpicture}
			\caption*{\title}
		\end{subfigure}
            \ifnum\i<3
              \vspace{0.5cm}
            \fi
	}
	\caption{\HT{Attitude formation transformation. Each arrow indicates an agent's yaw angle $\tilde{\phi}_k \in [-\pi, \pi)$, measured from the positive $x$-axis (counterclockwise positive, clockwise negative). (a) Original configuration $\tilde{\phi} = [\frac{3 \pi}{5}, \frac{11 \pi}{20}, \frac{\pi}{2}, \frac{9 \pi}{20}, \frac{2 \pi}{5}]$. (b) Translation only. (c) Scaling only. (d) Scaling + translation.}}
	\label{fig:tikz_attitude_transform}
\end{figure}

Definition~\ref{de attitude transformation} is demonstrated through the transformations shown in Fig.~\ref{fig:tikz_attitude_transform}(b)--(d), while subfigure (a) provides the original configuration for comparison. 

To better understand the geometric effects of the scaling factor $s_\phi$ and the translation factor $\tau_\phi$, let $\bar{\phi} = \frac{1}{n}\sum_{i=1}^n \tilde{\phi}_i$ be the center of $(G,\tilde{\phi})$. Equation~\eqref{eq_sta} can be rewritten as
\begin{equation}
    \phi' = s_\phi (\tilde{\phi} - \bar{\phi} 1_n) + \bar{\phi} 1_n + \tau_\phi' 1_n,
    \nonumber
\end{equation}
with $\tau_\phi' = \tau_\phi - \bar{\phi}(1 - s_\phi)$. This form reveals a four‑step process: (1) zero‑centering by subtracting $\bar{\phi} 1_n$, (2) scaling the deviations by $s_\phi$, (3) restoring the original center by adding $\bar{\phi} 1_n$, and (4) applying a pure translation $\tau_\phi' 1_n$.

Hence, $s_\phi$ scales only the zero‑centered deviations; $\tau_\phi = \bar{\phi}(1 - s_\phi) + \tau_\phi'$ is the sum of a compensation term $\bar{\phi}(1 - s_\phi)$, which counteracts the center shift induced by scaling when $\bar{\phi} \neq 0$ and $s_\phi \neq 1$, and the intended pure translation $\tau_\phi'$. Moreover, for any $i \neq j$,
\begin{equation}
    \phi'_i - \phi'_j = s_\phi (\tilde{\phi}_i-\tilde{\phi}_j).
    \nonumber
\end{equation}
Thus, $s_\phi$ directly scales the inter-agent yaw angle differences: its magnitude $|s_\phi|$ determines whether the differences are amplified ($|s_\phi| > 1$) or compressed ($0 \le |s_\phi| < 1$), while its sign determines whether the algebraic sign of each difference is preserved or reversed. 

Given $\tilde{\phi}$, let $\Delta = \max_i \tilde{\phi}_i - \min_i \tilde{\phi}_i$. To ensure $\phi' \in [-\pi, \pi)^n$, the scaling factor must satisfy $|s_\phi| < \frac{2\pi}{\Delta}$. For a given $s_\phi$ in this range, the translation $\tau_\phi$ then lies in $\left[ \max_i (-\pi - s_\phi \tilde{\phi}_i),\ \min_i (\pi - s_\phi \tilde{\phi}_i) \right)$.}

\subsection{Problem Statement}
In this article, we aim to achieve combined transformations including translation, and non-uniform scaling of the nominal joint position-attitude formation by tuning only the states of partial agents. As shown in Fig.~\ref{fig:fignustOA}, when avoiding obstacles, a formation that can perform a non-uniform scaling transformation in an arbitrary direction is more environmentally friendly and efficient compared to those that can only perform uniform scaling transformation in the literature \cite{Buckley2021}\cite{Gao2023}.

We adopt a leader–follower strategy for formation maneuver control. Consider a formation comprising $m$ leaders and $n-m$ followers, with the leader set denoted as $V_l=\{1, \cdots, m\}$ and the follower set as $V_f=\{m+1, \cdots, n\}$. The leader state and follower state are defined as $g_l=[g_1^{\top}, \cdots, g_m^{\top}]^{\top} \in \mathbb{R}^{3m}$ and $g_f=[g_{m+1}^{\top}, \cdots, g_n^{\top}]^{\top} \in \mathbb{R}^{3(n-m)}$, respectively. 

\subsubsection{Target Formation} \label{sss:target_formation}
We focus on the nominal joint position-attitude formation subject to non-uniform scaling along a specified direction. To explicitly represent such a setting, we extend the formation representation from the pair $(G, g)$ to a triple $(G, \tilde{g}, \theta)$, where $G$ remains the underlying sensing graph, while $(\tilde{g}, \theta)$ jointly describes an arbitrarily chosen nominal configuration for the team of agents. Specifically, the nominal state $\tilde{g} = [\tilde{g}_l^{\top}, \tilde{g}_f^{\top}]^{\top}$, where $\tilde{g}_l=[\tilde{g}_1^{\top}, \cdots, \tilde{g}_m^{\top}]^{\top} \in \mathbb{R}^{3m}$ represents the nominal state corresponding to the leaders, and $\tilde{g}_f=[\tilde{g}_{m+1}^{\top}, \cdots, \tilde{g}_n^{\top}]^{\top} \in \mathbb{R}^{3(n-m)}$ denotes the nominal state for the followers. Each component $\tilde{g}_k=[\tilde{p}^{\top}_k, \tilde{\phi}_k]^{\top} \in \mathbb{R}^3$ consists of $\tilde{p}_k=[\tilde{p}_k^x, \tilde{p}_k^y] \in \mathbb{R}^2$ and $\tilde{\phi}_k \in \mathbb{R}$. 

\HT{
\begin{remark}
Unlike rotationally invariant formations (e.g., \cite{Vu2024,Erskine2024,Jing2019,Chen2023,Cao2020,Lin2014,Fang2024,HeGen2025,Aranda2025}), the proposed non-uniform scaling is anisotropic and therefore requires an explicit scaling direction $\theta$ in the nominal configuration $(G, \tilde{g}, \theta)$. This is conceptually analogous to global bearing-based formation \cite{Zhao2016}, where the configuration—though not explicitly containing a $\theta$—is defined in a global coordinate frame that implicitly provides a reference direction. In this paper, $\theta$ is fixed offline. When a mission requires switching $\theta$, the time-varying scaling direction should be acquired by all the agents via communications.
\end{remark}}




The time-varying target state of $(G,\tilde{g}, \theta)$ is parameterized by the stacked vector $g^*(t)=[g_l^{*\top}(t), g_f^{*\top}(t)]^{\top}$, where $g_l^*(t)=[g_1^{*\top}(t), \cdots, g_m^{*\top}(t)]^{\top} \in \mathbb{R}^{3m}$ and $g_f^*(t)=[g_{m+1}^{*\top}(t), \cdots, g_n^{*\top}(t)]^{\top} \in \mathbb{R}^{3(n-m)}$ represent the target state for the leaders and followers, respectively. These target state evolve continuously over time with reference to the nominal configuration $(\tilde{g}, \theta)$. Specifically:
\begin{equation}
g^*(t) = (I_n \otimes S(t, \theta)) \tilde{g} + 1_n \otimes \tau(t),
\label{equ_nc}
\end{equation}
where $t$ is the time variable, $S(t, \theta)=\varTheta \diag(s(t)) \varTheta^{\top}$, $\varTheta=\begin{bmatrix} R(\theta) & 0 \\ 0 & 1 \end{bmatrix}$, $s(t)=[s^{\top}_{p}(t), s_{\phi}(t)]^{\top} \in \mathbb{R}^3$ and $\tau(t)=[\tau^{\top}_{p}(t), \tau_{\phi}(t)]^{\top} \in \mathbb{R}^3$ are time-varying maneuver parameters:
\begin{itemize}
\item $s_p(t) = [s_x(t), s_y(t)]^{\top}$ and $\tau_p(t) = [\tau_x(t), \tau_y(t)]^{\top}$ govern the non‑uniform scaling and translation of the position formation, respectively (see Definition~\ref{de:position_transformation});
\item $s_{\phi}(t)$ and $\tau_{\phi}(t)$ determine the scaling and translation of the attitude formation, respectively (see Definition~\ref{de attitude transformation}).
\end{itemize}

\subsubsection{Sensing Capability}
Each follower agent $k$ is not able to communicate with others, and can only access local relative measurements, including:
(i) the relative positions $\{p_i-p_k\}_{i\in N_k}$ and (ii) relative yaw angles $\{\phi_i-\phi_k\}_{i\in N_k}$.

Each leader agent, functioning as a mobile reference, has the enhanced capability of measuring its absolute state within the global coordinate frame. \HT{In addition, each leader has access to the time‑varying maneuver parameters $s(t)$, $\tau(t)$, as well as its own nominal configuration $(\tilde{g}_l, \theta)$.}

This heterogeneous sensing paradigm aligns with practical scenarios, where leaders may carry high-precision sensors (e.g., IMU-GPS fusion systems \cite{Zhou2025}) while followers rely on vision, UWB or LiDAR for local observations \cite{Mercedes2021,Shin2024, Vrba2025}. 

The distributed non-uniform scaling formation maneuver control problem is then defined as follows.
\HT{
\begin{problem}[{Non-Uniform Scaling Formation Maneuver Control\footnotemark[4]}]\label{p1}
Consider the heterogeneous sensing setup described in this subsection. Given the nominal configuration $(\tilde{g}, \theta)$ and time-varying maneuver parameters $s(t)$, $\tau(t)$. The control objective is to design a distributed controller $(u_k, \omega_k)$ such that all agents, subject to (\ref{equ_dy}), achieve:
	\begin{equation}
	\lim\limits_{t \to \infty} (g_k(t) - g_k^*(t))=0, k \in V,
    \nonumber
	\end{equation}
where $g^*(t)=[\cdots,{g_k^*}^\top(t),\cdots]^{\top}$ is determined by $(\tilde{g}, \theta)$ and the maneuver parameters according to \eqref{equ_nc}. 
\end{problem}}

\footnotetext[4]{\HT{A formation control problem in the literature usually refers to designing a distributed controller such that the agents converge to a desired shape (characterized by a manifold) asymptotically. In this work, to better reflect the advantage of non-uniform scaling transformation, we focus on formation maneuver control, where the agents are required to not only converge to the desired manifold, but also track specific trajectories determined by the leaders. If no leaders exist, the problem will reduce to formation shape control.}}

\section{Maximum Maneuverability and Matrix-Valued Laplacian}\label{Maneuverability}
To solve Problem \ref{p1}, we first analyze the conditions on the nominal configuration that ensure all maneuver parameters are effective. Next, we investigate how to select leaders and design formation rules so that the leaders can fully govern the behavior of the followers, thereby achieving maximal control over the formation (we refer to this system-wide property as maximum maneuverability). Finally, we derive the rank and graph conditions required for maximum maneuverability.
\subsection{Maximum Maneuverability}
In reality, certain nominal configurations can introduce singularities that undermine the effectiveness of maneuver parameters. As shown in Fig.~\ref{fig:scaling}, when all agents are aligned along the $ x $-axis, scaling along the $ y $-axis has no effect on the formation geometry, while $ x $-axis scaling remains effective. In this case, the $ y $-axis scaling parameter $ s_y $ becomes ineffective, resulting in limited maneuverability and inapplicability to complex tasks, such as transitioning from a line to a V-shape. Next, we formalize the concept of a non-singular configuration to address this issue.

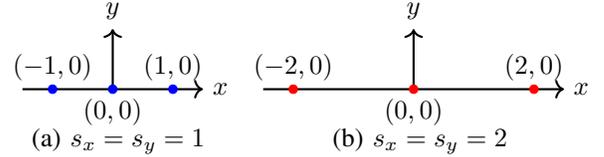
\begin{figure}[t]
    \centering 
    \begin{tikzpicture}[scale=0.8]
    \draw[->, thick] (-1.5,0) -- (1.5,0) node[right]{$x$};
    \draw[->, thick] (0,0) -- (0,1) node[above]{$y$};
    
    \foreach \x in {-1,0,1} {
        \filldraw[blue] (\x,0) circle (2pt);        \ifnum\x=0 
            \node[below] at (\x,0) {$(\x,0)$};
        \else 
            \node[above] at (\x,0) {$(\x,0)$};
        \fi
    }
    \node[below right] at (-1.5,-0.5) {(a) $s_x=s_y=1$};
    
    \begin{scope}[xshift=5cm]
    \draw[->, thick] (-2.5,0) -- (2.5,0) node[right]{$x$};
    \draw[->, thick] (0,0) -- (0,1) node[above]{$y$};
    
    \foreach \x in {-2,0,2} {
        \filldraw[red] (\x,0) circle (2pt);
        \ifnum\x=0 
            \node[below] at (\x,0) {$(\x,0)$};
        \else 
            \node[above] at (\x,0) {$(\x,0)$};
        \fi
    } 
    \node[below right] at (-1.5,-0.5) {(b) $s_x=s_y=2$};
    \end{scope}
    \end{tikzpicture}
    \caption{Singular configuration. (a) Original positions of the agents are aligned with the x-axis. (b) Under scaling transformation with $\theta = 0$, see Equation (\ref{eq_nusp}), a structural singularity occurs: the y-axis scaling parameter $s_y$ becomes ineffective, while x-axis scaling remains effective.} 
    
    \label{fig:scaling}
\end{figure}

From (\ref{equ_nc}), given $\theta$ and $\tilde{g}$, the time-varying target state $g^*(t)$ varies with the maneuver parameters, and all possible states form a space $\varPi(\tilde{g},\theta)$. We term $\varPi(\tilde{g},\theta)$ as the target state space, as defined by the following equation:
\HT{\begin{equation}\label{equ_sapce}
\begin{aligned}
\varPi(\tilde{g},\theta) 
&=\{g \in \mathbb{R}^{3n} : g=(I_n \otimes S(\theta))\tilde{g} + 1_n \otimes \tau, \\
&\qquad S(\theta) = \varTheta \diag(s) \varTheta^{\top},\ s, \tau \in \mathbb{R}^3 \} \\
&=\{g \in \mathbb{R}^{3n} : g_i= \tau + \varTheta \diag(s) \varTheta^{\top} \tilde{g}_i = \tau + \\& \qquad \varTheta \diag(\varTheta^{\top} \tilde{g}_i) s, \ i=1,\cdots,n,\  s, \tau \in \mathbb{R}^3 \} \\
&=\{g \in \mathbb{R}^{3n} : g=\mathcal{F}(s,\tau), \ s, \tau \in \mathbb{R}^3 \}, 
\end{aligned}
\end{equation}}
where $\mathcal{F}(s,\tau) \triangleq A(\tilde{g},\theta) [s^{\top}, \tau^{\top}]^{\top}$, $\tilde{g}_{i,\theta}=\varTheta^{\top} \tilde{g}_i=[\tilde{p}^x_{i,\theta}, \tilde{p}^y_{i,\theta},\tilde{\phi}_i]^{\top}$, 
\begin{equation}\label{eq A}
    A(\tilde{g},\theta)=\begin{bmatrix}
    \varTheta \diag(\tilde{g}_{1,\theta}) & I_3 \\ 
    \vdots & \vdots \\
    \varTheta \diag(\tilde{g}_{n,\theta}) & I_3 \\
    \end{bmatrix} \in \mathbb{R}^{3n \times 6}. 
\end{equation}

\begin{definition}[Non-Singular Configuration]
A nominal configuration $(\tilde{g}, \theta)$ is non-singular if the mapping $\mathcal{F}$ is injective, and is singular otherwise.
\label{defnon-singularConfiguration}
\end{definition}
By the above definition, a non-singular configuration ensures that all maneuver parameters uniquely determine the target state. Next, we establish equivalent conditions for a non-singular configuration.
\begin{lemma}
    \label{lem:equivalence}
    The following statements are equivalent:
    \begin{itemize}
        \item[(a)] $(\tilde{g}, \theta)$ is non-singular;
        \item[(b)] $\rank(A(\tilde{g},\theta)) = 6 $;
        \item[(c)] $\dim(\varPi(\tilde{g}, \theta)) = 6$;
        \item[(d)] each of the sets $\{\tilde{p}^x_{i,\theta}\}_{i \in V}$, $\{\tilde{p}^y_{i,\theta}\}_{i \in V}$, and $\{\tilde{\phi}_i\}_{i \in V}$ is not a singleton.
    \end{itemize}
\end{lemma}


\begin{proof}
    See Appendix \ref{sec:proofLemEquivalence}.
\end{proof}

It is worth noting that the three translation maneuver parameters $\tau_x$, $\tau_y$, and $\tau_{\phi}$ are always effective. In contrast, the effectiveness of the three scaling maneuver parameters requires the validity of the three conditions in Lemma~\ref{lem:equivalence}(d). For example, if the set $\{\tilde{p}^x_{i,\theta}\}_{i \in V}$ is a singleton then the maneuver parameter $s_x$ becomes ineffective.

To enable formation maneuver control with robust adaptability to complex environments and diverse mission requirements, maintaining maximum maneuverability is essential. Under the leader-follower strategy, a singular nominal configuration of the leaders compromises formation maneuverability by rendering certain maneuver parameters ineffective. 

Moreover, even if the leader configuration is non-singular, followers constrained by local sensing may still fail to track leader state changes. This highlights the challenge of ensuring that the influence of leader motions can fully and uniquely propagate throughout the formation.
\HT{
Inspired by \cite{Li2018,Zhao2018,yang2019,Hector2021,Morbidi2022,Chen20231,Fang2024,FANG2025,Huang2025}, we seek a Laplacian $M=[M_{ki}] \in\mathbb{R}^{3n\times 3n}$ determined by $(G,\tilde{g}, \theta)$ such that 
\begin{equation}\label{eq_mg}
\varPi(\tilde{g},\theta)=\{g\in\mathbb{R}^{3n}:Mg = 0\},    
\end{equation}
where $g = [g_l^\top, g_f^\top]^\top$ represents the combined state of leaders and followers, and $ M_{ki} $ reflects the interaction weight between agent $k$ and agent $i$. Thus, any feasible formation state $g$ satisfies $g \in \nullspace(M) = \varPi(\tilde{g},\theta)$. If, for any given leader state $g_l$, the follower state $g_f$ is uniquely determined by solving $M [g_l^\top, g_f^\top]^\top = 0$, then any change of $g_l$ induces a corresponding change in $g_f$.

Now, we formally define maximum maneuverability in the leader-follower framework as follows.
\begin{definition}[Maximum Maneuverability]\label{deMaximumManeuverability}
    A nominal formation  $(G,\tilde{g}, \theta)$ in $\mathbb{R}^2$ is said to achieve maximum maneuverability under the leader-follower strategy with Laplacian $M$ satisfying \eqref{eq_mg} if
    \begin{itemize}
        \item[(a)] the leaders' nominal configuration $(\tilde{g}_l, \theta)$ is non-singular;
        \item[(b)] for any given leader state $g_l$, the follower state $g_f$ is uniquely determined by solving $M [g_l^\top, g_f^\top]^\top = 0$.
    \end{itemize}
\end{definition}}
To achieve maximum maneuverability, we next address three key design aspects: leader selection criteria, construction of the matrix‑valued Laplacian, and the characterization of maximum maneuverability through both algebraic conditions and graph‑theoretic requirements.

\subsection{Leader Selection for Maximum Maneuverability}

The following lemma further gives equivalent conditions for the convenience of leader selection.
\begin{lemma} \label{lemma_leader_selection}
    The leaders' nominal configuration $(\tilde{g}_l, \theta)$ is non-singular if and only if the following conditions both hold:
    \begin{itemize}
        \item[(a)] the number of leaders satisfies $m \ge 2$;
        \item[(b)] each of the sets $\{\tilde{p}^x_{i,\theta}\}_{i \in V_l}$, $\{\tilde{p}^y_{i,\theta}\}_{i \in V_l}$, and $\{\tilde{\phi}_i\}_{i \in V_l}$ is not a singleton.
    \end{itemize}
\end{lemma}

\begin{proof}
    According to Lemma \ref{lem:equivalence}, the result follows directly.
\end{proof}

When the leader nominal configuration is non-singular, there exists a one-to-one correspondence between the leader state and the maneuver parameters. Next we show how to explicitly compute these maneuver parameters. 

From (\ref{equ_sapce}), we have $g_l=A(\tilde{g}_l,\theta) z$, where $z=[s^{\top}, \tau^{\top}]^{\top}$, and $A(\tilde{g}_l,\theta) \in \mathbb{R}^{3m \times 6}$ is obtained by substituting $\tilde{g}$ in Equation (\ref{eq A}) with $\tilde{g}_l$. Lemma~\ref{lemma_leader_selection} implies that $\rank(A(\tilde{g}_l, \theta))=6$. As a result, the maneuver parameters can be uniquely determined as:
\begin{equation}
    z=\left(A^{\top}(\tilde{g}_l,\theta) A(\tilde{g}_l,\theta) \right)^{-1} A^{\top}(\tilde{g}_l,\theta) g_l.
    \nonumber
\end{equation}
\subsection{Matrix-Valued Laplacian for Maximum Maneuverability}\label{ss_mvlfmm}

To construct a Laplacian matrix satisfying Definition \ref{deMaximumManeuverability}, we firstly introduce the following matrix-valued constraint involving three agents $i, j, k$:
\begin{equation}
    W_{jk}(\tilde{g}_{ijk},{\theta}) g_{ik}+W_{ki}(\tilde{g}_{ijk},{\theta}) g_{jk}=0,
    \label{equ_vvc}
\end{equation}
\HT{where $g_{ik}=g_i-g_k$, $g_{jk}=g_j-g_k$, $\tilde{g}_{ijk}=[\tilde{g}_i^{\top}, \tilde{g}_j^{\top}, \tilde{g}_k^{\top}]^{\top}$, $\tilde{\phi}_{jk} = \tilde{\phi}_j - \tilde{\phi}_k$, $\tilde{\phi}_{ki} = \tilde{\phi}_k - \tilde{\phi}_i$, $\tilde{p}_{jk,\theta} = R^\top(\theta) \tilde{p}_{jk}$, $\tilde{p}_{ki,\theta} = R^\top(\theta) \tilde{p}_{ki}$, $\tilde{p}_{jk} = \tilde{p}_j - \tilde{p}_k$, $\tilde{p}_{ki} = \tilde{p}_k - \tilde{p}_i$, 
\begin{align*}
W_{jk}(\tilde{g}_{ijk},{\theta}) &= w_{jk} \varTheta^{\top}, &
w_{jk} &= \begin{bmatrix} \diag\left(\tilde{p}_{jk,\theta}\right) & 0 \\ 0 & \tilde{\phi}_{jk} \end{bmatrix},\\
W_{ki}(\tilde{g}_{ijk},{\theta}) &= w_{ki} \varTheta^{\top}, & w_{ki} &= \begin{bmatrix} \diag\left(\tilde{p}_{ki,\theta}\right) & 0 \\ 0 & \tilde{\phi}_{ki} \end{bmatrix}.
\end{align*}
The weight matrices $W_{jk},W_{ki}$ are constructed from the pre-designed nominal state vector $\tilde{g}_{ijk}$ and the scaling direction $\theta$. Constraint \eqref{equ_vvc} requires that agent $k$ can sense both agents $i$ and $j$ (i.e., $(i,k),(j,k)\in E$).


Intuitively, the use of two edges is necessary because a single-edge constraint $W_{ik}g_{ik}=0$ would force $W_{ik}=0$ to maintain invariance under non-uniform scaling. Here, $\varTheta^\top$ projects the relative states onto a new coordinate frame whose $x$-axis aligns with the scaling direction $\theta$. Within this frame, the diagonal weight matrices $w_{jk}$ and $w_{ki}$ apply independent scaling along each axis. Crucially, the constraint possesses a key invariance property: it remains satisfied if $g_i,g_j,g_k$ undergo the same combined translation and non-uniform scaling transformation (see Lemma~\ref{lem:2}). This invariance provides a mechanism for coordinated maneuvers: by controlling leaders to follow a desired non-uniform scaling transformation along $\theta$, followers that satisfy \eqref{equ_vvc} will automatically undergo the same transformation, enabling formation-wide non-uniform scaling maneuvers. Extension of constraint \eqref{equ_vvc} to $\mathbb{R}^d$ ($d>2$) is conceptually possible by redefining the position components as $\tilde{p}_i \in\mathbb{R}^d$ and appropriately representing attitudes.

Taking Fig.~\ref{fig:figemvc} as an example, the nominal state of agents $i,j,k$ are $\tilde{g}_i=[-2, 1, \pi/2]^{\top}$, $\tilde{g}_j=[1.5,0.5,\pi/4]^{\top}$, $\tilde{g}_k=[0,0 ,0]^{\top}$, respectively, and $\theta=0$.} Then,
\begin{equation}
\nonumber
W_{jk}=\begin{bmatrix} 1.5 & 0 & 0 \\ 0 & 0.5 & 0 \\ 0 & 0 & \pi/4 \\ \end{bmatrix}, W_{ki}=\begin{bmatrix} 2 & 0 & 0 \\ 0 & -1 & 0 \\ 0 & 0 & -\pi/2 \\ \end{bmatrix}.
\end{equation}

The weight matrices apply a non-uniform scaling transformation to the relative state vector, ensuring that the sum of two directed edges under this transformation equals zero. Note that the choice of weight matrices is not unique.
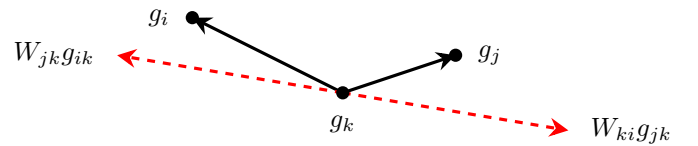
\begin{figure}[!htb]
	\centering
	\begin{tikzpicture}
	
	\draw (-2,1) node[left=5pt] {$g_i$};
	\draw (1.5,0.5) node[right=5pt] {$g_j$};
	\draw (0,0) node[below=5pt] {$g_k$};
	
	\draw (-3,0.5) node[left=5pt] {$W_{jk}g_{ik}$};
	\draw (3,-0.5) node[right=5pt] {$W_{ki}g_{jk}$};
	
	\begin{scope}[very thick, every node/.style={sloped,allow upside down}]
	\draw[-{Stealth[length=2.5mm,width=2.5mm]}] (0,0)-- node {} (-2,1);
	\draw[-{Stealth[length=2.5mm,width=2.5mm]}] (0,0)-- node {} (1.5,0.5);
	\draw[dashed, -{Stealth[length=2.5mm,width=2.5mm]}, color=red] (0,0)-- node {} (-3,0.5);
	\draw[dashed, -{Stealth[length=2.5mm,width=2.5mm]}, color=red] (0,0)-- node {} (3,-0.5);
	\end{scope}

	\fill (-2,1) circle (2.5pt);
	\fill (1.5,0.5) circle (2.5pt);
	\fill (0,0) circle (2.5pt);
	
	\end{tikzpicture} 
	
	\caption{An example of the matrix-valued constraint.}     
	\label{fig:figemvc}      
\end{figure}

We now present a key property of the matrix-valued constraint.
\begin{lemma} \label{lem:2}
    The constraint (\ref{equ_vvc}) is invariant to translation and non-uniform scaling transformation of $g_i, g_j, g_k$.
\end{lemma}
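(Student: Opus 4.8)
The plan is to verify directly that the left-hand side of \eqref{equ_vvc} is unchanged when each of $g_i,g_j,g_k$ is replaced by its image under an arbitrary transformation of the form $g\mapsto S g + \tau$, where $S = \varTheta\diag(s)\varTheta^\top$ with $\varTheta=\begin{bmatrix}R(\theta)&0\\0&1\end{bmatrix}$, $s\in\mathbb{R}^3$, and $\tau\in\mathbb{R}^3$. This class of transformations is exactly the combination of translation and non-uniform scaling (along the frame fixed by $\theta$) appearing in \eqref{equ_nc}; note that the nominal data $\tilde g_{ijk}$ and hence the weight matrices $W_{jk},W_{ki}$ are held fixed — only the live states $g_i,g_j,g_k$ move.

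First I would dispose of translation: since $W_{jk}$ and $W_{ki}$ act only on the differences $g_{ik}=g_i-g_k$ and $g_{jk}=g_j-g_k$, adding a common $\tau$ to all three states leaves $g_{ik}$ and $g_{jk}$ untouched, so the constraint is trivially translation-invariant. It remains to handle the scaling part $g\mapsto Sg$. Under this map, $g_{ik}\mapsto S g_{ik}$ and $g_{jk}\mapsto S g_{jk}$, so the left-hand side becomes $W_{jk}S g_{ik} + W_{ki}S g_{jk}$. The key computation is to show $W_{jk}S = W_{jk}$ on the relevant subspace, or more precisely that $W_{jk}S g_{ik} + W_{ki}S g_{jk} = W_{jk}g_{ik} + W_{ki}g_{jk}$, which then equals $0$ by hypothesis.

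The crux is the commutation/absorption identity. Writing $W_{jk} = w_{jk}\varTheta^\top$ with $w_{jk}=\diag(\tilde\xi_{jk})$ for $\tilde\xi_{jk}=[\tilde p^x_{jk,\theta},\tilde p^y_{jk,\theta},\tilde\phi_{jk}]^\top$ (and similarly for $W_{ki}$), and $S=\varTheta\diag(s)\varTheta^\top$, we get $W_{jk}S = w_{jk}\varTheta^\top\varTheta\diag(s)\varTheta^\top = w_{jk}\diag(s)\varTheta^\top$ using $\varTheta^\top\varTheta=I_3$. Since $w_{jk}$ and $\diag(s)$ are both diagonal they commute, so $W_{jk}S = \diag(s)\,w_{jk}\varTheta^\top = \diag(s)\,W_{jk}$; likewise $W_{ki}S=\diag(s)\,W_{ki}$. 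Therefore $W_{jk}Sg_{ik}+W_{ki}Sg_{jk} = \diag(s)\big(W_{jk}g_{ik}+W_{ki}g_{jk}\big) = \diag(s)\cdot 0 = 0$. Combining with translation-invariance, the full transformation $g\mapsto Sg+\tau$ preserves \eqref{equ_vvc}, which is the claim.

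The only subtlety — and the one step I would be careful about — is bookkeeping around the two slightly different displayed definitions of $\varTheta$ in the excerpt (one writes $\varTheta=\begin{bmatrix}R(\theta)&0\\0&1\end{bmatrix}$, the other $\begin{bmatrix}R^\top(\theta)&0\\0&1\end{bmatrix}$); either way $\varTheta$ is orthogonal with $\varTheta^\top\varTheta=I_3$, which is all the argument uses, so the conclusion is unaffected. No obstacle of substance is expected here: the proof is a two-line orthogonality-plus-diagonal-commutation calculation, and the real content is simply recognizing that the weight matrices were built precisely so that conjugation by $\varTheta$ turns the anisotropic scaling $S$ into a plain diagonal rescaling that factors out of the whole constraint.
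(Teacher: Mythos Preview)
Your proposal is correct and follows essentially the same route as the paper's own proof: translation cancels in the differences $g_{ik},g_{jk}$, and for scaling one uses $W_{jk}S = w_{jk}\varTheta^\top\varTheta\diag(s)\varTheta^\top = w_{jk}\diag(s)\varTheta^\top = \diag(s)W_{jk}$ (via $\varTheta^\top\varTheta=I_3$ and commutativity of diagonal matrices) to pull out a common $\diag(s)$ factor and annihilate the expression. The only imprecision is your intermediate sentence claiming the transformed left-hand side equals $W_{jk}g_{ik}+W_{ki}g_{jk}$ rather than $\diag(s)$ times it, but your subsequent computation gets this right.
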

\begin{proof}
    For each $l \in \{i,j,k\}$, we apply a translation $\tau$ and a non-uniform scaling transformation $S$, obtaining:
    \begin{equation}
    \begin{split}
    g'_l = S g_l + \tau, \quad l \in \{i,j,k\}.
    \end{split}
    \nonumber
    \end{equation}
    
    We now demonstrate that the transformed vectors $g'_i, g'_j, g'_k$ satisfy the matrix-valued constraint in (\ref{equ_vvc}):
    \begin{equation}
    \begin{split}
    &W_{jk} g'_{ik}+W_{ki} g'_{jk}
    =W_{jk} S g_{ik}+W_{ki} S g_{jk}\\
    &=\begin{bmatrix} \diag\left(\tilde{p}_{jk, \theta}\right)  & 0 \\  0 & \tilde{\phi}_{jk} \end{bmatrix} \diag(s) \varTheta^{\top} g_{ik}\\ &~~ + \begin{bmatrix} \diag\left(\tilde{p}_{ki,\theta}\right)  & 0 \\  0 & \tilde{\phi}_{ki} \end{bmatrix} \diag(s) \varTheta^{\top}  g_{jk}\\
    &=\diag(s) \left( W_{jk} g_{ik}+W_{ki} g_{jk} \right) =0,
    \end{split}
    \nonumber
    \end{equation}
    the final equality holds because multiplication of diagonal matrices is commutative.
\end{proof}

Next, we construct a matrix-valued Laplacian based on the proposed matrix-valued constraint. Let the constraint index set be defined as $C = \left\{(i,j,k) \in V^3 : (i,k),(j,k) \in E, i < j \right\}$. The set of all constraints associated with the sensing graph $G$ is then given by $\{ W_{jk} g_{ik}+W_{ki} g_{jk}=0 : (i,j,k) \in C \}$.

Each matrix-valued constraint defined in (\ref{equ_vvc}) corresponding to the constraint index $(i,j,k) \in C$ can be aggregated into a matrix-valued Laplacian $M(G,\tilde{g}, \theta) \in \mathbb{R}^{3n \times 3n}$ satisfying:
\begin{equation}\label{equ_M}
M g =0,
\end{equation}
where the matrix block located at the $k$th row and $i$th column of $M$, denoted $M_{ki}$, is defined as follows:

\begin{equation}\label{equ_mvl}
    M_{ki} =
    \begin{cases}
        \sum\limits_{(i,j,k) \in C} W_{jk} + \sum\limits_{(j',i,k) \in C} W_{kj'}, & \text{if } k \ne i, \\
        \sum\limits_{(i,j,k) \in C} W_{ij} + \sum\limits_{(j',i,k) \in C} W_{j'i}, & \text{if } k = i.
    \end{cases}
\end{equation}
By convention, each summation is defined to be zero when the corresponding index set is empty.





We now investigate key properties of the matrix-valued Laplacian $M(G,\tilde{g}, \theta)$.
\HT{
\begin{lemma} \label{lem:LaplacianProperties}
    For any nominal configuration $(\tilde{g}, \theta)$, it always holds that $\varPi(\tilde{g},\theta) \subseteq \nullspace(M(G,\tilde{g}, \theta))$. Moreover, if $(\tilde{g}, \theta)$ is non-singular, then the following conditions are equivalent:
    \begin{itemize}
        \item[(a)] $\nullspace(M(G,\tilde{g}, \theta))=\varPi(\tilde{g},\theta)$,
        \item[(b)] $\rank(M(G,\tilde{g}, \theta)) = 3n-6$.
    \end{itemize}
\end{lemma}
\begin{proof}
    $\varPi(\tilde{g},\theta) \subseteq \nullspace(M(G,\tilde{g}, \theta))$ follows directly from the definition of $M(G,\tilde{g}, \theta)$ and Lemma \ref{lem:2}. For non-singular $(\tilde{g}, \theta)$, Lemma \ref{lem:equivalence} gives $\dim(\varPi(\tilde{g},\theta)) = 6$. Since $\varPi(\tilde{g}, \theta) \subseteq \nullspace(M(G,\tilde{g}, \theta))$, $\nullspace(M(G,\tilde{g}, \theta)) = \varPi(\tilde{g}, \theta)$ if and only if $\dim(\nullspace(M(G,\tilde{g}, \theta))) = \dim(\varPi(\tilde{g}, \theta))$. Given that $\dim(\varPi(\tilde{g}, \theta)) = 6$, it follows that $\nullspace(M(G,\tilde{g}, \theta)) = \varPi(\tilde{g}, \theta)$ if and only if $\rank(M(G,\tilde{g}, \theta)) = 3n - 6$.
\end{proof}
}
To this point, we have derived the condition on the matrix-valued Laplacian for characterizing the target configuration space $\varPi(\tilde{g}, \theta)$. Next, we investigate how the leader state uniquely determine the follower state through the matrix-valued Laplacian. We begin by reformulating \eqref{equ_M} as
\begin{equation}
M g = \hat{M} \diag(\varTheta^{\top}) g =0,
\nonumber
\end{equation}
and subsequently partition $\hat{M}$ based on the leader-follower structure to facilitate this analysis.
\begin{equation}
\hat{M}= \begin{bmatrix}
    \hat{M}_l \\
    \hat{M}_f
\end{bmatrix}=\left[ \begin{array}{cc}
\hat{M}_{ll} & \hat{M}_{lf}\\
\hat{M}_{fl} & \hat{M}_{ff} \end{array} \right],
\nonumber
\end{equation}
where $\hat{M}_l = [\hat{M}_{ll} \, \hat{M}_{lf}]$, $\hat{M}_f = [\hat{M}_{fl} \, \hat{M}_{ff}]$, $\hat{M}_{ll} \in \mathbb{R}^{3m \times 3m}$, $\hat{M}_{lf} \in \mathbb{R}^{3m \times 3(n-m)}$, $\hat{M}_{fl} \in \mathbb{R}^{3(n-m) \times 3m}$ and $\hat{M}_{ff} \in \mathbb{R}^{3(n-m) \times 3(n-m)}$. Based on this partitioning, we obtain
\begin{equation} \label{eq follower}
    \hat{M}_{fl} \diag(\varTheta^{\top})  g_l + \hat{M}_{ff} \diag(\varTheta^{\top}) g_f = 0.
\end{equation}

If the block matrix $\hat{M}_{ff}$ is non-singular, the follower state $g_f$ can be uniquely determined by
\begin{equation}\label{eq_gf}
g_f = - \diag(\varTheta) \hat{M}_{ff}^{-1} \hat{M}_{fl} \diag(\varTheta^{\top}) g_l.
\end{equation}

\HT{The foregoing analysis leads to the following algebraic characterization of maximum maneuverability.
\begin{theorem} \label{thm:max_maneuver_char}
    A nominal formation $(G,\tilde{g}, \theta)$ in $\mathbb{R}^2$ achieves maximum maneuverability under leader-follower strategy if and only if the following conditions are satisfied:
    \begin{itemize}
        \item[(a)] The number of leaders satisfies $m \ge 2$;
        \item[(b)] Each of the sets $\{\tilde{p}^x_{i,\theta}\}_{i \in V_l}$, $\{\tilde{p}^y_{i,\theta}\}_{i \in V_l}$, and $\{\tilde{\phi}_i\}_{i \in V_l}$ is not a singleton;
        \item[(c)] The block matrix $\hat{M}_{ff}$ in \eqref{eq follower} is non-singular.
    \end{itemize}
\end{theorem}
\begin{proof}
Conditions (a)–(b) are equivalent to Definition~\ref{deMaximumManeuverability}(a) by Lemma~\ref{lemma_leader_selection}. Condition (c) is equivalent to Definition~\ref{deMaximumManeuverability}(b) because $\hat{M}_{ff}$ being non-singular ensures a unique solution for $g_f$ (see \eqref{eq_gf}).
\end{proof}}



\subsection{Sensing Graphs for Maximum Maneuverability}
\label{subsec:sensing_graphs_condition}


\HT{
According to Theorem~\ref{thm:max_maneuver_char}, non-singularity of the block matrix $\hat{M}_{ff}$ in \eqref{eq follower} is essential for $(G, \tilde{g}, \theta)$ to achieve maximum maneuverability. However, verifying this algebraic condition requires centralized global computation and is sensitive to specific state values. To obtain more design-friendly conditions, we establish necessary and sufficient graph-theoretic requirements ensuring that $\hat{M}_{ff}$ is non-singular.

The key insight is that for a 2‑rooted graph $G$ (which, by Lemma~\ref{lem:mp2rg_equivalence}, contains a spanning DEP‑induced graph $\mathcal{L}_{\kappa}$), the non‑singularity of $\hat{M}_{ff}$ can be analyzed DEP‑by‑DEP. The matrix $\hat{M}_{ff}$ fails to be invertible only when the configuration $(\tilde{g}, \theta)$ belongs to a certain algebraic variety of measure zero. This degenerate set can be explicitly described by the following condition on each DEP.}

\begin{assumption} \label{ass_2rootrank}
    Consider a nominal formation $(G, \tilde{g}, \theta)$ in $\mathbb{R}^2$, where $G$ is a 2-rooted graph with a spanning DEP-induced graph $\mathcal{L}_{\kappa}$, each DEP $G_{\mathcal{P}_h} = (V_{\mathcal{P}_h}, E_{\mathcal{P}_h})$, $h = 1, \dots, \kappa$, with entry agents $\{i_h, j_h\}$, satisfies:
   
    
    \begin{equation}\label{eq nondegeneracy condition}
        \prod_{\{u, v\} \in \mathcal{E}_{\mathcal{P}_h}} \tilde{p}^x_{u v,\theta} \tilde{p}^y_{u v,\theta} \tilde{\phi}_{u v} \ne 0,
    \end{equation}
    where $\mathcal{E}_{\mathcal{P}_h} = \left\lbrace \{u, v\} : (u, v), (v, u) \in E_{\mathcal{P}_h} \right\rbrace  \cup \{ i_h, j_h \}$, $\tilde{p}^x_{uv,\theta} = R^\top(\theta) \tilde{p}^x_{uv}$, $\tilde{p}^y_{uv,\theta} = R^\top(\theta) \tilde{p}^y_{uv}$, and $\tilde{\phi}_{uv} = \tilde{\phi}_u - \tilde{\phi}_v$.
\end{assumption}

\HT{\begin{remark}\label{rem:design_feasibility}
Condition \eqref{eq nondegeneracy condition} can be satisfied via a simple design procedure. For the position part, given distinct nominal positions $\{\tilde{p}_i\}$, compute edge direction angles $\psi_{uv} = \operatorname{atan2}(\tilde{p}^y_{uv}, \tilde{p}^x_{uv}) \pmod{2\pi}$. The forbidden $\theta$ values form a finite set $\{(\psi_{uv} - k\pi/2) \pmod{2\pi} \mid \{u,v\}\in \mathcal{E}, \ k=0,1,2,3\}$ (at most $4|\mathcal{E}|$ points), where $\mathcal{E} = \bigcup_{h=1}^{\kappa} \mathcal{E}_{\mathcal{P}_h}$. Select $\theta$ as the midpoint of the largest gap between sorted forbidden values. For the attitude part, the condition is easily ensured by assigning distinct nominal yaw angles to all agents.
\end{remark}
\begin{remark}\label{rem:design_feasibility1}
Compared to affine \cite{Zhao2018}, distance-ratio-based \cite{Cao2020}, clique-based \cite{HeGen2025}, angle‑based \cite{Buckley2021}, and bearing‑based \cite{Erskine2024} position formation control approaches—which impose non‑collinearity restrictions—our assumption merely requires distinct positions $\{\tilde{p}_i\}$, a substantially milder condition. For attitudes, although the nominal configuration requires distinct headings, the actual formation can achieve uniform alignment ($s_\phi=0$) or scaled relative headings ($s_\phi\neq0$), thereby subsuming both consensus \cite{KwangKyo2014} and fixed‑relative‑attitude schemes \cite{Dimarogonas2009,Song2017} while providing enhanced maneuverability.
\end{remark}

We can now state a concise graph‑theoretic characterization of maximum maneuverability.}

\begin{theorem} \label{the:rankL}
    A nominal formation $(G,\tilde{g}, \theta)$ in $\mathbb{R}^2$ achieves maximum maneuverability under a leader-follower strategy if and only if $G$ is 2-rooted with the two roots as leaders, and the nominal formation $(G, \tilde{g}, \theta)$ satisfies Assumption \ref{ass_2rootrank}.
\end{theorem}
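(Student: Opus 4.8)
The plan is to prove Theorem~\ref{the:rankL} by reducing it, via Definition~\ref{de MFFselection} and Lemma~\ref{lem:mp2rg_equivalence}, to a statement about the non-singularity of $\hat{M}_{ff}$ that can be verified DEP-by-DEP. First I would establish the \emph{necessity} direction. If $(G,\tilde{g},\theta)$ has maximum maneuverability, then by Definition~\ref{de MFFselection}(a)--(b) there must be at least two leaders whose projected coordinates and headings each affinely span $\mathbb{R}$, so the natural candidates for leaders are two roots of a $2$-rooted graph; and by Definition~\ref{de MFFselection}(c), $\hat{M}_{ff}$ is non-singular. I would argue that if $G$ is not $2$-rooted with those two leaders as roots, then some follower is not $2$-reachable from the leader set, which produces a vertex cut of size $\le 1$ separating that follower's component from the leaders; restricting $\hat{M}_{ff}$ to that component and using the matrix-valued-Laplacian row-sum property (each block-row of $M$ sums to zero) yields a nontrivial kernel vector (a common shift on the isolated component), contradicting non-singularity. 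Similarly, if Assumption~\ref{ass_2rootrank} fails on some DEP --- say an edge is axis-aligned or two neighbors share a heading --- then one of the scalar channels decouples and $\hat{M}_{ff}$ acquires a kernel direction in that channel.

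For \emph{sufficiency}, assume $G$ is $2$-rooted with roots $\{1,2\}$ as leaders and Assumption~\ref{ass_2rootrank} holds. By Lemma~\ref{lem:mp2rg_equivalence}, $G$ contains a spanning DEP-induced graph $\mathcal{L}_\kappa$ built by successively attaching DEPs $G_{\mathcal{P}_1},\dots,G_{\mathcal{P}_\kappa}$. Since adding edges to $G$ only adds constraints and cannot enlarge $\nullspace(M)$ beyond $\varPi(\tilde g,\theta)$ (the reverse inclusion is Lemma~\ref{lem LaplacianNull}), it suffices to prove the claim for $G=\mathcal{L}_\kappa$ itself. Using \eqref{equ_Mhat}, the problem becomes showing $\hat{M}_{ff}$ is non-singular, and because of the block structure $W_{jk}=w_{jk}\varTheta^\top$, $W_{ki}=w_{ki}\varTheta^\top$ with $w$'s block-diagonal in the $x$, $y$, $\phi$ channels, $\hat{M}_{ff}$ (after the similarity by $\diag(\varTheta^\top)$) is block-diagonal across the three scalar channels. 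So it reduces to the \emph{scalar} statement: for each channel $\chi\in\{x,y,\phi\}$, the follower block of a scalar ``stress''-type Laplacian on $\mathcal{L}_\kappa$, with weights built from nonzero scalar differences $\tilde p^x_{uv,\theta}$, $\tilde p^y_{uv,\theta}$, $\tilde\phi_{uv}$, is non-singular. This is exactly the setting of Lemma~\ref{lem:PathG} / Theorem~\ref{the_dsm} referenced earlier (the per-DEP computation of the stabilizing/weight matrix), so I would induct on $\kappa$: the base case $\mathcal{L}_0$ has no followers; and attaching a DEP with entry agents $\{i_h,j_h\}$ whose positions are already fixed by the induction hypothesis adds $\ell_h$ new followers along a path, whose scalar constraints form a tridiagonal-like system whose determinant is a product of the nonzero edge-difference weights from \eqref{eq nondegeneracy condition} --- hence non-singular, and moreover the new followers are uniquely solved in terms of $i_h,j_h$. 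Combining the two leaders' non-singularity (Lemma~\ref{lemma_leader_selection}, which needs the affine-span conditions on $V_l$, guaranteed since the roots' coordinates appear among the $E_u$ products) with channelwise non-singularity of $\hat{M}_{ff}$ gives all three conditions of Definition~\ref{de MFFselection}.

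The main obstacle I anticipate is the inductive step on the DEP-induced graph: one must carefully track that attaching a new DEP does not disturb the already-established unique solvability of earlier followers, i.e. that the system stays block-triangular with respect to the attachment order, and that the ``inner agents'' of the new DEP really do get pinned down by exactly the two entry agents $i_h,j_h$ (which, crucially, may themselves be followers attached in an earlier stage). Writing out the path sub-system for a single DEP and computing its determinant as a telescoping product of the weights $w$ --- and checking that Assumption~\ref{ass_2rootrank} is exactly the non-vanishing condition for that product --- is the technical heart; I expect this is where Lemma~\ref{lem:PathG} is invoked as a black box. A secondary subtlety is the necessity direction's graph-theoretic part: translating ``not $2$-reachable'' into a concrete kernel vector of $\hat{M}_{ff}$ requires identifying the separating vertex and using the zero-row-sum structure of $M$ to build the kernel element supported on the severed component.
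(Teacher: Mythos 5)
Your sufficiency argument follows the paper's route almost exactly: reduce via Definition~\ref{de MFFselection} to non-singularity of $\hat{M}_{ff}$, invoke Lemma~\ref{lem:mp2rg_equivalence} to get a spanning DEP-induced graph, exploit the block-lower-triangular structure of $\hat{M}_f$ in \eqref{equ_mf} across the attachment order, and dispose of each diagonal block $\hat{M}^h_{ff}$ by the channel decoupling and tridiagonal determinant telescoping of Lemma~\ref{lem:PathG}. That part is fine (one caveat you share with the paper: passing from $G$ to its spanning DEP-induced subgraph is not literally "adding constraints cannot enlarge the nullspace," since $M$ \emph{sums} constraint blocks and violations of individual constraints can cancel in the sum; the argument really operates on the Laplacian built from the spanning subgraph's constraint set).

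The genuine gap is in your necessity direction. You propose to exhibit a \emph{right} kernel vector of $\hat{M}_{ff}$ equal to "a common shift on the isolated component" $U$, justified by the zero block-row-sum property. This vector is not in the kernel: the row-sum property only makes a common shift of \emph{all} agents lie in $\nullspace(M)$. For a constraint centered at $k\in U$ with the cut vertex $i$ as one neighbor and $j'\in U$ as the other, shifting $U$ by $\tau$ while leaving $i$ fixed sends $g_{ik}\mapsto g_{ik}-\tau$ but leaves $g_{j'k}$ unchanged, so the constraint value changes by $-W_{j'k}\tau\neq 0$ in general. To repair a right-kernel construction you would need the restriction to $U$ of a full similarity transformation $g\mapsto Sg+\tau$ chosen to fix agent $i$, and even then the constraints \emph{centered at} $i$ that mix neighbors from $U$ and its complement are not annihilated, so the construction still does not close. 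The paper avoids all of this by arguing on the \emph{left}: the $3(|U|)$ rows of $\hat{M}_f$ centered at agents of $U$ have support only on the $3(|U|+1)$ columns indexed by $U\cup\{i\}$, and their common right kernel already contains the (generically $6$-dimensional) space $\varPi$ restricted to $U\cup\{i\}$, so those rows have rank at most $3(|U|+1)-6<3|U|$ and are linearly dependent; the resulting left null vector of $\hat{M}_f$ restricts to a left null vector of $\hat{M}_{ff}$, proving singularity. You should replace your kernel-vector construction with this row-dependence count (or with the corrected similarity-fixing-$i$ construction plus a separate treatment of constraints centered at the cut vertex).
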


\begin{proof}
    See Appendix \ref{sec:proofTheRankL}.
\end{proof}

\HT{
\begin{remark}\label{rem:sparsity_comparison}
While some existing approaches (e.g., \cite{ZHANG2025}) rely solely on rank conditions of the Laplacian, we provide explicit graph-theoretic conditions that are easier to verify and design. Compared to affine formation control \cite{Zhao2018,Lin2016} which requires $3$-rooted graphs in $\mathbb{R}^2$ (and thus at least $3$ neighbors for each follower and $3$ leaders), our 2-rooted condition permits sparser graphs (e.g., in a DEP structure each inner agent has exactly two neighbors) and requires only two leaders. Unlike methods that only give graph conditions \cite{Lin2014}, we additionally provide the explicit non-degeneracy condition \eqref{eq nondegeneracy condition}, which can be directly verified or enforced during formation design.
\end{remark}
}

\section{Formation Maneuver Control}\label{Control}

\HT{
Based on the preceding analysis, particularly the maximum maneuverability property established in Theorem \ref{the:rankL}, we propose the designed distributed controller in this section. The maximum maneuverability guarantees two critical prerequisites for controller design: (i) the block matrix $\hat{M}_{ff}$ is non-singular, enabling the follower state to be uniquely determined by the leader state—a property fundamental for constructing a well-defined feedback law; and (ii) all six maneuver parameters are effective, providing full control authority over the formation's geometry. 

However, to ensure the stability and implementability of the specific control laws, additional technical conditions are required. These conditions can be satisfied by following a design approach similar to that outlined in Remark~\ref{rem:design_feasibility}.


\begin{assumption} \label{ass_2rootstab}
    Consider a nominal formation $(G, \tilde{g}, \theta)$ in $\mathbb{R}^2$, where $G$ is a 2-rooted graph with a spanning DEP-induced graph $\mathcal{L}_{\kappa}$. Each DEP $G_{\mathcal{P}_h} = (V_{\mathcal{P}_h}, E_{\mathcal{P}_h})$, $h = 1, \dots, \kappa$ , with entry agents $\{i_h, j_h\}$ and inner agents $\{1, \dots, \ell_h\}$, satisfies:
    \begin{equation} \label{eq stabilitycondition}
        \prod_{l=2}^{\ell_h} \tilde{p}^x_{i_hl,\theta} \tilde{p}^y_{i_hl,\theta} \tilde{\phi}_{i_hl} \neq 0.
    \end{equation}
\end{assumption}

\begin{assumption} \label{ass_c}
   For each matrix-valued constraint $(i,j,k) \in C$ defined in (\ref{equ_vvc}), the nominal configuration $(\tilde{g},\theta)$ satisfies
   \begin{equation}
       \tilde{p}^x_{ji, \theta} \ne 0, \quad \tilde{p}^y_{ji, \theta} \ne 0, \quad \text{and} \quad \tilde{\phi}_{ji} \ne 0.
       \nonumber
   \end{equation}
\end{assumption}}





\subsection{Distributed Formation Maneuver Control Laws}\label{ss_dfmcl}
In this subsection, we propose distributed non-uniform scaling formation maneuver control laws, in the scenarios with stationary leaders and moving leaders, respectively.

\HT{
According to the control objective described in Problem \ref{p1}, we define the tracking errors for followers and leaders as $\delta_l = g_l - g_l^*$ and $\delta_f = g_f - g_f^*$, respectively, where the target follower state is given by
\begin{equation}
    g_f^* = -\left(\hat{M}_{ff} \diag(\varTheta^{\top}) \right)^{-1} \hat{M}_{fl} \diag(\varTheta^{\top}) g^*_l,
    \nonumber
\end{equation}
which follows directly from \eqref{eq_gf}. The invertibility of $\hat{M}_{ff}$ in this expression is guaranteed by the maximum maneuverability conditions of Theorem \ref{the:rankL}.}

The control objective is thus reformulated as designing a distributed control law such that $ \delta_f \to 0 $ and $ \delta_l \to 0 $ as $ t \to \infty $.

\subsubsection{Stationary Leaders} We first consider the case where leaders are stationary, i.e., $g_l=g^*_l$ and $\dot{g}^*_l=0$. In this case, the closed‑loop control law can be written in compact form as
\begin{equation}\label{equ_sluf}
    \begin{cases}
    \dot{g}_l=0, \\
    \dot{g}_f = -\diag(\varTheta) D \hat{M}_{ff} \diag(\varTheta^{\top}) \delta_f,
    \end{cases}
\end{equation}
where $D = \diag(D_k)$ is a diagonal matrix to be designed to ensure the convergence of the tracking error, and each $D_k \in \mathbb{R}^{3 \times 3}$ is a non-zero diagonal gain matrix corresponding to agent $k$. \HT{The existence of such a diagonal stabilizing matrix $D$ is guaranteed under Assumption~\ref{ass_2rootrank} and \ref{ass_2rootstab}.}

\HT{
According to \eqref{equ_mvl}, the closed‑loop dynamics for each follower reduce to
\begin{equation}\label{uk}
    \dot{g}_k =\begin{bmatrix} u_k \\
\omega_k \end{bmatrix} = -\varTheta D_k \sum_{(i,j,k) \in C} \left( W_{jk} g_{ik} + W_{ki} g_{jk} \right), k \in V_f.
\end{equation}

The explicit form of controller \eqref{uk} requires each follower to measure relative states from at least two neighbors. This structural requirement is ensured by the 2‑rooted graph condition for maximum maneuverability.}

\begin{theorem}\label{th convergence}
   Let the nominal formation $(G,\tilde{g}, \theta)$ in $\mathbb{R}^2$ satisfy Assumptions \ref{ass_2rootrank} and \ref{ass_2rootstab}. There exists a diagonal matrix $D$ such that the tracking error $\delta_f$ converges to zero globally and exponentially fast under the control law (\ref{equ_sluf}).
\end{theorem}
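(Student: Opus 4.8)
The plan is to analyze the closed-loop system for the followers, which by \eqref{equ_sluf} is the linear system $\dot\delta_f = -\diag(\varTheta)\, D\, \hat M_{ff}\, \diag(\varTheta^\top)\, \delta_f$. Since $\diag(\varTheta)$ is orthogonal, the change of variables $\eta = \diag(\varTheta^\top)\delta_f$ gives $\dot\eta = -D\,\hat M_{ff}\,\eta$, so it suffices to find a diagonal $D$ such that $-D\hat M_{ff}$ is Hurwitz, equivalently such that every eigenvalue of $D\hat M_{ff}$ has positive real part. By Lemma~\ref{lem:DiagonalStability}, this follows once we show that all leading principal minors of $\hat M_{ff}$ (under an appropriate ordering of the follower indices) are nonzero. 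Global exponential convergence is then immediate from linearity.

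The core of the argument is therefore a combinatorial–algebraic claim about $\hat M_{ff}$: there is an ordering of the $3(n-m)$ rows/columns (grouping the three coordinates of each follower) for which every leading principal minor is nonzero, under Assumptions~\ref{ass_2rootrank} and~\ref{ass_2rootstab}. First I would fix the spanning DEP-induced graph $\mathcal L_\kappa$ guaranteed by Lemma~\ref{lem:mp2rg_equivalence} (with the two roots being the leaders), and order the followers by the order in which they are added: all inner agents of DEP $G_{\mathcal P_1}$ first, in path order, then those of $G_{\mathcal P_2}$, etc. Because each DEP is attached only through two entry agents that already belong to $\mathcal L_{h-1}$, and its inner agents form a simple bidirectional path whose only edges to the rest of the graph are $(i_h,1)$ and $(j_h,\ell_h)$, the matrix $\hat M_{ff}$ in this ordering is block lower-triangular up to the contributions of later DEPs, and within each diagonal block the structure is (essentially) bidiagonal in the path variable. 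I would then compute the relevant principal minors block by block: the leading minor obtained after processing a prefix of $\mathcal L_\kappa$ factors as a product over the processed DEPs of per-DEP determinants, each of which is a product of the scalar quantities $\tilde p^x_{i_h l,\theta}$, $\tilde p^y_{i_h l,\theta}$, $\tilde\phi_{i_h l}$ appearing in \eqref{eq stabilitycondition} (together with the entry-related quantities controlled by \eqref{eq nondegeneracy condition}). Here Lemma~\ref{lem:PathG} — the per-DEP computation of the stabilizing/constraint matrix referenced in the contributions — is exactly the tool that makes this factorization explicit; I would invoke it to reduce the global minor to a product of nonzero per-path factors. The nonvanishing of each factor is precisely Assumption~\ref{ass_2rootstab} (and Assumption~\ref{ass_2rootrank} for well-posedness of $\hat M_{ff}^{-1}$, i.e. the full determinant), completing the verification of the hypotheses of Lemma~\ref{lem:DiagonalStability}.

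The main obstacle I anticipate is bookkeeping the ordering so that the leading principal minors genuinely factor as claimed: a DEP attached late can still write entries into rows of earlier-indexed agents (through its entry agents), so $\hat M_{ff}$ is not literally block triangular, and one must argue that these "backward" couplings do not affect the leading principal minors taken at the boundaries between DEPs — essentially a Schur-complement / elimination argument showing that eliminating a DEP's inner-agent block leaves the earlier blocks unchanged up to terms that do not enter the minors in question. A secondary subtlety is the $\ell_h = 1$ case and the handling of the two entry agents when one or both of them are followers (so their rows are inside $\hat M_{ff}$ rather than $\hat M_{ll}$); I would treat this by noting that the minor structure only depends on the path edges and the entry pair via the quantities in $E_u$, which Assumption~\ref{ass_2rootrank} keeps nonzero. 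Once the factorization is established, applying Lemma~\ref{lem:DiagonalStability} and reversing the orthogonal change of variables finishes the proof, and the exponential rate can be quantified by the spectral gap of $-D\hat M_{ff}$.
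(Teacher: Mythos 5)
Your proposal is correct and follows essentially the same route as the paper: reduce to the spectrum of $D\hat{M}_{ff}$ via the orthogonal similarity $\diag(\varTheta^\top)$, decompose $\hat{M}_{ff}$ along the DEPs of the spanning DEP-induced graph (where, per \eqref{equ_mf}, the matrix is in fact literally block lower-triangular — constraints created by a later DEP are rowed at that DEP's own inner agents, so the ``backward coupling'' obstacle you anticipate does not arise), and combine Lemma~\ref{lem:PathG} with the nonvanishing leading principal minors guaranteed by Assumptions~\ref{ass_2rootrank} and~\ref{ass_2rootstab} in order to apply Lemma~\ref{lem:DiagonalStability}. The only organizational difference is that the paper applies Lemma~\ref{lem:DiagonalStability} per DEP (and per coordinate, to the tridiagonal blocks $\hat{M}_{ff}^x$, $\hat{M}_{ff}^y$, $\hat{M}_{ff}^\phi$ inside the proof of Lemma~\ref{lem:PathG}) and then assembles $D=\diag\{D^h\}$ using the block-triangular spectrum decomposition, whereas you apply the lemma once to the globally reordered matrix; both hinge on the same three-term-recurrence computation of the minors.
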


\begin{proof}
    See Appendix \ref{sec:proofTheconvergence}.
\end{proof}
\HT{
\begin{remark}
In \cite{Lin2014}, a stabilizing matrix exists for almost all Laplacians whose kernel contains the nominal configuration, but infeasible cases are not clearly identified. By contrast, Assumption~\ref{ass_2rootstab} provides an explicit condition on the nominal configuration, guaranteeing a stabilizing matrix always exists when the Laplacian follows \eqref{equ_mvl}.
\end{remark}}

\subsubsection{Moving Leaders} To address moving leaders with time-varying velocities, we propose the following closed‑loop control law that utilizes absolute velocity feedback, similar to the approach in \cite{Zhao2018,Fang2022}.
\HT{
\begin{equation}\label{equ_mluf}
\begin{split}
\dot{g}_k &=\begin{bmatrix} u_k \\ \omega_k \end{bmatrix}  \\
   &=\begin{cases}
    -k_l (g_k - g_k^*) + \dot{g}_k^*, & \text{$k \in V_l$,} \\
     W_{kk}^{-1} [W_{jk}(k_fg_{ik} + \dot{g}_i)+ W_{ki}(k_fg_{jk} + \dot{g}_j)], & \text{$k \in V_f$,}
    \end{cases}
\end{split}
\end{equation}
where $W_{kk} = W_{jk} + W_{ki}$, and $k_l, k_f > 0$ are control gains. The invertibility of $W_{kk}$ is ensured by Assumption~\ref{ass_c}.}



\begin{theorem} \label{the_dd}
    Under Assumptions \ref{ass_2rootrank} and \ref{ass_c}. If the leader velocity $\dot{g}_l^*(t)$ is time-varying and continuous, then the tracking errors $\delta_l$ and $\delta_f$ of the single-integrator multi-agent systems converge to zero globally and exponentially fast under the control law (\ref{equ_mluf}).
\end{theorem}

\begin{proof}
    Under Assumption \ref{ass_c}, $W_{kk}$ is non-singular. The matrix-vector form of (\ref{equ_mluf}) is
    \begin{equation}
    \begin{cases}
    \dot{\delta}_l =- k_l\delta_l, \\
    \hat{M}_{ff} \diag(\varTheta^{\top}) \dot{\delta}_f = - \hat{M}_{ff} \diag(\varTheta^{\top}) k_f\delta_f.
    \end{cases}
    \nonumber
    \end{equation}
    Since $\diag(\varTheta^{\top})$ is non-singular and $G$ satisfies Assumption \ref{ass_2rootrank}, according to Theorem \ref{the:rankL}, we know that $\hat{M}_{ff}$ is non-singular, then we have $\dot{\delta}_f = - k_f\delta_f$. Additionally, $\dot{\delta}_l =- k_l\delta_l$, which implies that $\delta_l$ and $\delta_f$ globally converge to zero at an exponential rate.
\end{proof}

\HT{
\begin{remark}
    Similar to Assumption 2 in \cite{Fang2024}, our Assumption~\ref{ass_c} ensures the non-singularity of the matrix $W_{kk}$. While \cite{Zhao2018} requires the Laplacian to be positive semi-definite and satisfy a rank condition for this property, our approach instead imposes only a rank condition on the Laplacian, making the assumption substantially weaker.
\end{remark}}

\subsection{Design of the Diagonal Stabilizing Matrix $ D $}\label{sec_design_D}
In the preceding section, we have obtained the global convergence of the proposed controller based on the existence of $D$. 
However, computing $ D $ remains a challenging inverse eigenvalue problem, which can be formulated as
\begin{equation}\label{computing D}
\begin{array}{ll}
   \textbf{find}  &  D = \diag(x)\\
    \textbf{subject to} &  \forall \lambda \in \sigma(D\hat{M}_{ff}), \\ 
                  &  \Re(\lambda) > 0,
\end{array}
\end{equation}
where $x \in \mathbb{R}^{3(n-m)}$, and $\sigma(\cdot)$ denotes the matrix spectrum. 

This problem is inherently nonlinear, non-convex, and high-dimensional, typically requiring centralized computation \cite{Lin2014,Lin2016,Yang2021}. Nevertheless, For each DEP with $\ell =1$ or $2$, we obtain explicit closed-form expressions.
\begin{theorem} \label{the_dsm}
Under Assumptions \ref{ass_2rootrank} and \ref{ass_2rootstab}, for a DEP-induced graph $\mathcal{L}_{\kappa}$. Each DEP $G_{\mathcal{P}_h} = (V_{\mathcal{P}_h}, E_{\mathcal{P}_h})$, $h = 1, \dots, \kappa$ , with $\ell_h \in \{1, 2\}$ inner vertices, satisfies:

\begin{itemize}
    \item If $\ell_h = 1$, with $V_{\mathcal{P}_h} = \{i, j, k\}$, $E_{\mathcal{P}_h} = \{(i, k), (j, k)\}$, then $D^h = w_{ij}^{\top}$, and $D^h \hat{M}^h_{ff}$ has positive eigenvalues.
    
    \item If $\ell_h = 2$, with $V_{\mathcal{P}_h} = \{i, j, k, l\}$, $E_{\mathcal{P}_h} = \{(i, k), (k, l), (l, k), (j, l)\}$, then
    \[
    D^h = \begin{bmatrix}
        \sgn(w_{il}) |w_{kl} w_{ij} w_{kj}| + w_{il} & 0 \\
        0 & w_{kl} w_{ij} w_{il}
    \end{bmatrix}
    \]
    and $D^h \hat{M}^h_{ff}$ has positive eigenvalues.
\end{itemize}
Thus, there exists a diagonal matrix $D = \diag( D^h )$ such that $D \hat{M}_{ff}$ has positive eigenvalues, where $w_{ij}$ is a diagonal matrix defined in \eqref{equ_vvc} and $\hat{M}_{ff}$, $\hat{M}^h_{ff}$ are defined in \eqref{equ_mf}.
\end{theorem}

\begin{proof}
Case $\ell_h = 1$:
     $\hat{M}^h_f$ for this configuration is
    \begin{equation}
    \hat{M}^h_f =\left[ \hat{M}^h_{fl} \, \hat{M}^h_{ff} \right]=\left[ \begin{array}{cc|c}
    w_{jk} & w_{ki} & w_{ij}
    \end{array} \right]. 
    \nonumber
    \end{equation}
    Under Assumption \ref{ass_2rootrank} and \ref{ass_2rootstab}, we obtain
    \begin{equation} 
    D^h \hat{M}^h_{ff} = w_{ij}^{\top} w_{ij},
    \nonumber
    \end{equation}
    which is a positive definite matrix. Thus, all eigenvalues of $ D^h \hat{M}^h_{ff} $ are strictly positive.

Case $\ell_h = 2$:
    $\hat{M}^h_f$ for this configuration is:
    \begin{equation}
    \hat{M}^h_f =\left[ \hat{M}^h_{fl} \, \hat{M}^h_{ff} \right]=\left[ 
    \begin{array}{cc|cc}
    w_{lk} & 0 & w_{il} & w_{ki} \\
    0 & w_{lk} & w_{jl} & w_{kj}
    \end{array} \right] .
    \nonumber
    \end{equation}
    Apply the permutation $ P = [e_1, e_4, e_2, e_5, e_3, e_6] $, where $ e_i $ are standard basis vectors, to get $ M' = P^T (D^h \hat{M}^h_{ff}) P = \diag(A_x, A_y, A_\phi) $,
    \begin{equation}
        \begin{split}
            d_q &= \sgn(\tilde{p}^q_{il,\theta}) |\tilde{p}^q_{kl,\theta} \tilde{p}^q_{ij,\theta} \tilde{p}^q_{kj,\theta}| + \tilde{p}^q_{il,\theta}, \\
            A_q &= \begin{bmatrix}
            d_q \tilde{p}^q_{il,\theta} & d_q \tilde{p}^q_{ki,\theta} \\
            \tilde{p}^q_{kl,\theta} \tilde{p}^q_{ij,\theta} \tilde{p}^q_{il,\theta} \tilde{p}^q_{jl,\theta} & \tilde{p}^q_{kl,\theta} \tilde{p}^q_{ij,\theta} \tilde{p}^q_{il,\theta} \tilde{p}^q_{kj,\theta}
            \end{bmatrix},
        \end{split}
        \nonumber
    \end{equation}
    where $q \in \{x, y, \phi\}$, and for $q = \phi$, $\tilde{p}^q_{ij,\theta} = \tilde{\phi}_{ij}$ (i.e., the $\theta$ subscript is omitted).
    
    For $ A_\phi $, under Assumption \ref{ass_2rootrank} and \ref{ass_2rootstab}, the trace is:
    \begin{equation}
    \tr(A_\phi) = |\tilde{\phi}_{kl} \tilde{\phi}_{ij} \tilde{\phi}_{kj}| |\tilde{\phi}_{il}| + \tilde{\phi}_{il}^2 + \tilde{\phi}_{kl} \tilde{\phi}_{ij} \tilde{\phi}_{il} \tilde{\phi}_{kj} > 0.
    \nonumber
    \end{equation}
    Since $\tilde{\phi}_{il} \tilde{\phi}_{kj} - \tilde{\phi}_{jl}\tilde{\phi}_{ki} = \tilde{\phi}_{il} \tilde{\phi}_{kj} - \tilde{\phi}_{jl}(\tilde{\phi}_{kj}+\tilde{\phi}_{jl}+\tilde{\phi}_{li}) = \tilde{\phi}_{ij} \tilde{\phi}_{kl}$ and the Assumption \ref{ass_2rootrank} and \ref{ass_2rootstab} hold, the determinant is:
    \begin{equation}
    \begin{split}
    \det(A_\phi) 
    & = (\sgn(\tilde{\phi}_{il}) |\tilde{\phi}_{kl} \tilde{\phi}_{ij} \tilde{\phi}_{kj}| + \tilde{\phi}_{il}) \tilde{\phi}^2_{kl} \tilde{\phi}^2_{ij} \tilde{\phi}_{il} \\
    & = (|\tilde{\phi}_{il}| |\tilde{\phi}_{kl} \tilde{\phi}_{ij} \tilde{\phi}_{kj}| + \tilde{\phi}^2_{il}) \tilde{\phi}^2_{kl} \tilde{\phi}^2_{ij} >0.
    \end{split}
    \nonumber
    \end{equation}
    For $ A_x $ and $ A_y $, the analysis is analogous. Thus, all eigenvalues of $ D^h \hat{M}^h_{ff} $ are strictly positive.
\end{proof}
Theorem \ref{the_dsm} provides closed‑form stabilizing matrices for DEPs with $\ell \leq 2$. For longer DEPs, closed‑form solutions may not exist. However, by decomposing the 2‑rooted graph into multiple DEPs, the stabilizing matrix can be computed independently per DEP by solving \eqref{computing D}. This approach enables decentralized or parallel computation, thereby significantly accelerating the overall solution process.

\subsection{Complete Implementation Workflow} \label{sec_workflow}
Algorithm~\ref{alg:control-flow} summarizes the complete procedure for deploying the proposed distributed non‑uniform scaling formation maneuver control, from initial design to online execution.

\begin{algorithm}[t]
\caption{\HT{Complete Implementation Procedure for Distributed Non-uniform Scaling Formation Maneuver Control}}
\label{alg:control-flow}
\HT{
\begin{algorithmic}[1]
\STATE \textbf{Design Phase:}
\STATE Design a DEP-induced sensing graph $G$
\STATE If leaders are stationary:
\STATE \quad Design $(\tilde{g},\theta)$ satisfying Assumptions \ref{ass_2rootrank}, \ref{ass_2rootstab}
\STATE If leaders are moving:
\STATE \quad Design $(\tilde{g},\theta)$ satisfying Assumptions \ref{ass_2rootrank}, \ref{ass_c}

\STATE \textbf{Offline Preparation:}
\STATE Compute matrix‑valued Laplacian $M(G,\tilde{g},\theta)$ via (\ref{equ_mvl})
\STATE Compute stabilizing matrix $D = \diag(D_k)$ using the methods in Section \ref{sec_design_D}
\STATE Distribute to each agent $k$: $\{M_{ki}: i\in\mathcal{N}_k\}$ and $D_k$

\STATE \textbf{Online Execution (for each agent $k$):}
\STATE \textbf{while} not converged \textbf{do}
\STATE \quad Measure $\{p_i-p_k,\phi_i-\phi_k\}_{i\in\mathcal{N}_k}$
\STATE \quad \textbf{if} leaders are stationary \textbf{then}
\STATE \qquad Compute $u_k, \omega_k$ via (\ref{uk})
\STATE \quad \textbf{else} \ \textsl{(moving leaders)}
\STATE \qquad Measure $\{\dot{g}_i\}_{i\in\mathcal{N}_k}$
\STATE \qquad Compute $u_k, \omega_k$ via (\ref{equ_mluf})
\STATE \quad \textbf{end if}
\STATE \quad Update state via single-integrator dynamics \eqref{equ_dy}
\STATE \textbf{end while}
\end{algorithmic}}
\end{algorithm}

\begin{figure*}[t] 
	\centering  
	\includegraphics[width=7in]{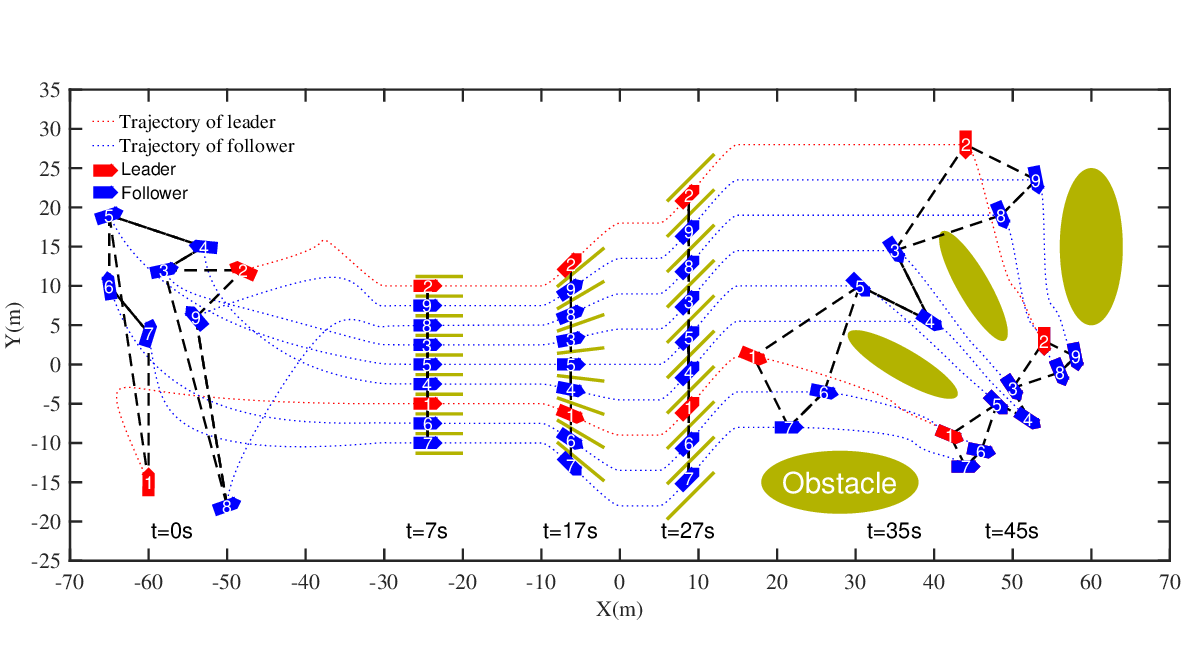}        
	\caption{Formation maneuver trajectories in 2-D space.}     
	\label{fig_trajg}     
\end{figure*}
\begin{figure}[!htbp]
	\centering	
	\includegraphics[width=3in]{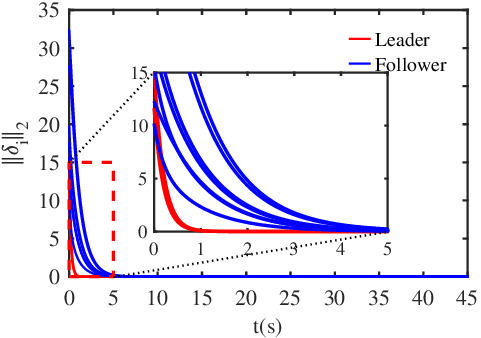}   	
	\caption{Tracking errors.}     
	\label{fig_te}      
\end{figure}


\section{A Simulation Example}\label{Simulation}
This section gives simulations to illustrate our results. We consider a nominal formation lying in $\mathbb{R}^2$ with DEP-induced graph shown in Fig.~\ref{fig_mdep}. The formation consists of two leaders $g_l= [g^{\top}_1, g^{\top}_2]^{\top}$ and seven followers $g_f= [g^{\top}_3, g^{\top}_4, g^{\top}_5, g^{\top}_6, g^{\top}_7, g^{\top}_8, g^{\top}_9]^{\top}$. \HT{The nominal configuration $(\tilde{g}, \theta)$ with $\theta = 0$ is given as: 
\[
\tilde{g}_1 = [-4, -2, \frac{ \pi}{8}]^{\top}, \tilde{g}_2 = [2, 4, -\frac{\pi}{4}]^{\top}, \tilde{g}_3 = [0, 1, -\frac{\pi}{16}]^{\top},
\]
\[
\tilde{g}_4 = [1, -1, \frac{\pi}{16}]^{\top}, \tilde{g}_5 = [-1, 0, 0]^{\top}, \tilde{g}_6 = [-2, -3, \frac{3\pi}{16}]^{\top},
\]
\[
\tilde{g}_7 = [-3, -4, \frac{\pi}{4}]^{\top}, \tilde{g}_8 = [3, 2, -\frac{\pi}{8}]^{\top}, \tilde{g}_9 = [4, 3, -\frac{3\pi}{16}]^{\top}.
\]}

The matrix-valued Laplacian $M(G,\tilde{g}, \theta)$ corresponding to the nominal formation $(G, \tilde{g}, \theta)$ can be calculated by (\ref{equ_mvl}), and the diagonal stabilizing matrix $D$ can be obtained based on Theorem \ref{the_dsm}. It is clear that this nominal formation satisfies Assumptions \ref{ass_2rootrank}, \ref{ass_2rootstab}, and \ref{ass_c}.

This simulation aims to validate a proposed control strategy for coordinated formation control of multiple agents navigating dense obstacles. The control goal is to enable leaders to track the predefined reference trajectory, defined by maneuver parameters in Table \ref{tab:my_label} with cubic spline interpolation for continuously differentiable trajectories, while followers maintain a desired geometric formation using controller \eqref{equ_mluf}.

The simulation results, depicted in Fig.~\ref{fig_trajg}, illustrate the dynamic evolution of the formation. The initial positions and yaw angles of the agents are randomly assigned. Upon activation, the multi-agent system achieves the first target formation within 5 seconds. At this stage, the line formation ($s_x = 0$) and attitude alignment ($s_\phi = 0$) are established. During 5-10 seconds, the position formation executes pure translation. Subsequently (10-15s), $s_\phi$ transitions from 0 to -1, inducing an attitude scaling transformation. During 15–20s, the team navigates the obstacles by scaling the position formation ($s_y:2.5 \rightarrow 4.5$) while maintaining the pre-configured attitudes from the previous phase. This scale-based avoidance strategy results in a tightly coordinated interplay between attitude and position formations, a capability unattainable by either technique in isolation, allowing the formation to navigate through a trumpet-shaped obstacle.

\begin{table}[t]
    \centering
    \begin{tabular}{ccccccc}
        \hline
        t(s)  & $s_{x}$ & $\tau_{x}$ & $s_{y}$ & $\tau_{y}$ & $s_{\phi}$ & $\tau_{\phi}$ \\[6pt]
        \hline
        0  & 5       & -47        & 2.5     & 4          & 1          & 0 \\
        5  & 0       & -30        & 2.5     & 0          & 0          & 0 \\
        10 & 0       & -15        & 2.5     & 0          & 0          & 0 \\
        15 & 0       & -10        & 2.5     & 0          & -1         & 0 \\
        20 & 0       & 0          & 4.5     & 0          & -1         & 0 \\
        25 & 0       & 5          & 4.5     & 0          & 0          & $\frac{\pi}{4}$ \\
        30 & 0       & 15         & 4.5     & 10         & 0          & $\frac{\pi}{4}$ \\
        35 & 4.5     & 35         & 4.5     & 10         & 1          & $-\frac{\pi}{4}$ \\
        45 & 2       & 50         & 2       & -5         & 1          & $-\frac{\pi}{4}$ \\
        \hline
    \end{tabular}
    \caption{Key maneuver parameters}
    \label{tab:my_label}
\end{table}


The formation then undergoes the following maneuvers.
\begin{itemize}
    \item 20--25s: The attitude formation realigns and executes a $\frac{\pi}{4}$ translational shift.
    \item 25--30s: The position formation performs simultaneous translations in both $x$ and $y$ directions.
    \item 30--35s: The formation performs a non-uniform scaling transformation, resulting in an enlarged formation pattern.
    \item 35--45s: The position formation translates and uniformly contracts, while maintaining fixed yaw angles.
\end{itemize}

By adjusting only leaders' positions and attitudes, the proposed control strategy enables continuous translations and non-uniform scalings of the joint position-attitude formation. Notably, by accounting for the physical shape of the agents (rather than modeling them as point masses), the proposed control method allows the formation to navigate through narrow arrays of parallel or non-parallel obstacles, as illustrated in the figure. In contrast, most existing approaches, e.g., \cite{Fang20241,Fang2024,Vu2024,Cheng2025,Huang2025}, require the team to make a detour, resulting in reduced efficiency and a lower likelihood of finding feasible paths in dense obstacle environments.

As evidenced by Fig.~\ref{fig_te}, the tracking errors converge asymptotically to zero, which validates the proposed control strategy and aligns with Theorem \ref{the_dd}.



\HT{
\section{Discussions for Practical Scenarios}
\label{sec:discussion}

\subsection{Collision Avoidance and Physical Size}\label{subsec:collision_avoidance}

The proposed control laws ensure global exponential convergence under sparse sensing, but do not explicitly handle collision avoidance. We address this separately for the formation shape control and formation maneuver control phases. 

For formation shaping from arbitrary initial positions, existing methods (e.g., artificial potential fields or control barrier functions) often require weakening the convergence guarantees or increasing the required sensing density. In practice, we adopt a two‑fold strategy: (i) for scenarios with sufficiently small initial errors, the proposed controller can directly ensure collision‑free convergence \cite{Zhao2016}; (ii) for challenging initial configurations, we first employ formation‑reshaping techniques \cite{ZhangFan2025,Kratky2025} to guide agents into a safe, near‑formation state, and then switch to the proposed controller.

During the formation maneuvering phase, collision-free motion can be ensured by a two-layer strategy. At the planning layer, the maneuver parameters \(s(t)\) and \(\tau(t)\) are designed such that the resulting target trajectory \(g^*(t)\) satisfies all inter‑agent safety‑distance constraints. To formulate these constraints, consider any pair of agents \(i\) and \(j\) with a safety radius \(r > 0\) (accounting for physical size). The squared distance between their target positions is:
\begin{equation}
\| p^*_i(t) - p^*_j(t) \|^2 = (s_x(t))^2 ( \tilde{p}_{ij,\theta}^x)^2 + (s_y(t))^2 ( \tilde{p}_{ij,\theta}^y)^2.
\nonumber
\end{equation}
Therefore, collision avoidance requires:
\begin{equation}
\label{eq:collision_constraint}
(s_x(t))^2 ( \tilde{p}_{ij,\theta}^x)^2 + (s_y(t))^2 ( \tilde{p}_{ij,\theta}^y)^2 \ge 4r^2, \quad \forall t,i \ne j.
\end{equation}
These inequalities define a feasible region in the \(s_x\)-\(s_y\) plane for each agent pair. Planning \(s_p(t)\) to stay inside the intersection of all such regions yields a collision‑free trajectory \(g^*(t)\). At the control layer, the proposed distributed laws guarantee exponential tracking of \(g^*(t)\), and if the tracking errors are sufficiently small, actual collisions are avoided.



\subsection{Impact of Practical Factors on the Sensing Graph}
\label{subsec:practical_graph_issues}
The theoretical results rely on a fixed, bidirectional, and 2-rooted sensing graph $G$. Practical deployments must consider factors that may affect these assumptions.

\subsubsection{Sensing Range and Directional Constraints}
\label{subsubsec:sensing_range_dir}
To maintain sensing within a maximum range \(r_{\max}\), \(s_p(t)\) must satisfy
\begin{equation}
\label{eq:range_constraint}
(s_x(t))^2(\tilde p_{ij,\theta}^x)^2+(s_y(t))^2(\tilde p_{ij,\theta}^y)^2 \le r_{\max}^2,\forall t,(i,j)\in E.
\end{equation}
While \(s_x(t) \neq s_y(t)\), the relative position vector
\begin{equation}
p_i(t) - p_j(t) = R(\theta) \diag(s_p(t)) R^\top(\theta) (\tilde{p}_i - \tilde{p}_j)
\nonumber
\end{equation}
generally changes direction unless aligned with the axes defined by \(\theta\). Fixed sensors may lose sight of neighbors during such maneuvers. Since each follower senses only two neighbors, equipping it with two steerable sensors—each actively tracking one neighbor—preserves required sensing links.

\subsubsection{Obstacle Occlusion}
\label{subsubsec:occlusion}
Occlusion breaks line‑of‑sight (LOS), threatening link integrity. Our simulations (Fig.~\ref{fig_trajg}) assume either low‑profile obstacles or the use of non‑line‑of‑sight (NLOS) technologies (e.g., millimeter-wave radar \cite{Zhu2023} or UWB \cite{Silva2020, Chen2025}) to maintain sensing links. Otherwise, online graph reconfiguration would be required—a direction for future work.


\subsubsection{Bidirectional Sensing Assumption}
\label{subsubsec:bidirectional_assumption}
The bidirectional sensing assumption is practically achievable through a combination of steerable sensors and NLOS technologies (Sections~\ref{subsubsec:sensing_range_dir} and \ref{subsubsec:occlusion}). Strictly, mutual sensing is required only between consecutive inner agents in DEPs of length $\ell \geq 2$. More critically, our distributed control relies solely on locally obtainable information, and it is theoretically compatible with systems where similar information is acquired via communication and cooperative estimation---as demonstrated in distributed swarm systems \cite{Zhou2022}. Such communication‑based implementations inherently ensure bidirectional information flow. Generalizing the theoretical framework to purely directed physical sensing graphs is a direction for future research.

\subsection{Extension to Complex Dynamics}\label{subsec:complex_dynamic}

The proposed approach is not applicable to systems with inherent kinematic coupling between position and attitude (e.g., unicycles or quadrotors) for simultaneous and independent formation control of both quantities.

Nevertheless, it is well-suited for systems with physically decoupled position and attitude actuation. This includes fully actuated platforms (e.g., omnidirectional robots), which can be modeled using single-integrator dynamics. More broadly, it enables control of composite platforms (e.g., fire‑fighting robots) that consist of a mobile base (which may be nonholonomic, e.g., differential‑drive, or possess complex nonlinear dynamics, e.g., quadrotor) and an independently steerable payload (e.g., a water cannon). In such systems, the payload orientation is coordinated via our attitude formation control (driven by single-integrator inputs, with active compensation for any base-induced disturbances), while standard techniques (e.g., in \cite{Fathian2021}) convert the single-integrator position commands into feasible base motion.

\subsection{Velocity and Acceleration Limits}
\label{subsec:dynamic_constraints}

To respect agent velocity and acceleration limits, the maneuver parameters must satisfy
\begin{equation}\label{eq:vel_constraint}
\| \dot{g}_i^*(t) \| = \left\| \dot{\tau}(t) + \varTheta \diag(\dot{s}(t)) \varTheta^{\top} \tilde{g}_i \right\| \leq v_{\max}, \, \forall i, t,
\end{equation}
\begin{equation}\label{eq:acc_constraint}
\| \ddot{g}_i^*(t) \| = \left\| \ddot{\tau}(t) + \varTheta \diag(\ddot{s}(t)) \varTheta^{\top} \tilde{g}_i \right\| \leq a_{\max}, \, \forall i, t,
\end{equation}
where $v_{\max}$ and $a_{\max}$ are the maximum speed and acceleration. Because these constraints depend only on the known nominal configuration $(\tilde{g},\theta)$ and the maneuver parameters, sufficiently smooth trajectories for $s(t)$ and $\tau(t)$ can be planned to meet the physical limits of the agents.



\subsection{Navigation in Complex Environments}
\label{subsec:nav_complex_env}

Navigation in complex environments can be formulated as a spatio-temporal trajectory optimization problem. Given the nominal formation \((G,\tilde g,\theta)\), the goal is to compute smooth time-varying maneuver parameters \(s(t)\) and \(\tau(t)\) that satisfy all safety, sensing, and dynamic constraints while minimizing a suitable cost.

In addition to the constraints introduced earlier, static obstacle avoidance is required. For each agent \(i\) (occupying a region \(\mathcal{A}_i(t)\) determined by its target state \(g^*_i(t)\)) and each obstacle region \(\mathcal{O}_k\), we impose
\begin{equation}
\label{eq:obstacle_constraint}
\operatorname{dist}\bigl(\mathcal{A}_i(t),\, \mathcal{O}_k\bigr) \ge r_a, \quad \forall i,k,t,
\end{equation}
where \(r_a > 0\) is a safety margin. The function \(\operatorname{dist}(\cdot,\cdot)\) denotes the minimum Euclidean distance between two sets.

The resulting optimization over horizon $[0,T]$ is
\begin{subequations}
\label{eq:full_planning_problem}
\begin{align}
\min_{s,\tau,T} \quad & \int_0^T \mathcal{L}(s,\tau) dt + \rho T \\
\text{s.t.} \quad & s(0)=s_0, \tau(0)=\tau_0, s(T)=s_T, \tau(T)=\tau_T, \\
& \dot s(0)=\dot \tau(0)=\dot s(T)=\dot \tau(T)=0, \\
& \eqref{eq:collision_constraint}, \eqref{eq:range_constraint}, \eqref{eq:vel_constraint}, \eqref{eq:acc_constraint}, \eqref{eq:obstacle_constraint}, \quad \forall t \in [0,T],
\end{align}
\end{subequations}
where $\mathcal{L}(s,\tau) = \|\dot s\|^2 + \|\dot \tau\|^2 + \|\ddot s\|^2 + \|\ddot \tau\|^2$. The boundary conditions \(s_0,\tau_0\) and \(s_T,\tau_T\) specify the initial and desired final formation shapes. The objective promotes smooth variations of the maneuver parameters while penalizing the total duration via the weight \(\rho \ge 0\). This constrained optimization can be solved numerically using established methods \cite{Hossein2022,Quan2023,Liu2024}. Consequently, the proposed distributed control law serves as the low-level tracking controller within a hierarchical autonomy stack, where a higher-level planner computes feasible \(s(t)\) and \(\tau(t)\) by solving \eqref{eq:full_planning_problem}. Investigating the tight integration between such planners and our matrix-valued-Laplacian-based control law is a promising direction for future work.}

\section{Conclusion}\label{Conclusion}
We have proposed a novel distributed leader-follower formation maneuver control approach for multi-agent systems in the plane, enabling simultaneous non-uniform scaling and translational maneuvers of a joint position-attitude formation. A matrix-valued Laplacian has been developed to characterize the target configuration space, and the nominal formation was shown to achieve maximum maneuverability if and only if the underlying sensing graph is 2-rooted. Additionally, 
by decomposing the graph into multiple DEPs, a scalable approach for the stabilizing matrix design was proposed. 
Simulation results have validated the effectiveness of the control strategy, showing that closed-loop errors converge globally to zero and adaptive formation maneuvers are achieved in dense obstacle scenarios. Future work will focus on designing controllers that leverage more sophisticated attitude transformations and enhance resilience to agent or edge failures, all without relying on a global coordinate system, thereby bridging the gap between theoretical advancements and practical deployment in real-world multi-agent systems.


\section{Appendix}
\subsection{Proof of Lemma \ref{lem:mp2rg_equivalence}} \label{sec:proofLemmp2rg_equivalence}
\begin{apxproof}
	\textit{(Sufficiency)} Suppose $G$ contains a spanning DEP-induced graph $\mathcal{L}_{\kappa}$ constructed recursively as in Definition~\ref{def:multi_dual_entry_graph}. By definition, any agent in $\mathcal{G}_{\mathcal{P}_h}$ has two disjoint bidirectional paths in $\mathcal{L}_h$ to agents 1 and 2. Since $\mathcal{L}_{\kappa}$ spans $G$, every agent in $G$ is 2-reachable from $\{1,2\}$, i.e., $G$ is 2-rooted.

	
	\textit{(Necessity)} Assume $G$ is 2-rooted with roots $\{1, 2\}$. Initialize $\mathcal{L}_0 = (V_0, E_0)$, where $V_0 = \{1, 2\}$ and $E_0 = \emptyset$. For any agent $k \notin V_0$, since $k$ is 2-reachable from $\{1, 2\}$, there must exist two disjoint paths from $1$ and $2$ to $k$, the union of these paths with involved vertices must contain a DEP $G_{\mathcal{P}_1} = (V_{\mathcal{P}_1}, E_{\mathcal{P}_1})$ with entry agents $i_1=1$, $j_1=2$ and $\ell_1 \ge 1$ inner agents labeled $\{|V_0| + 1, \dots, |V_0| + \ell_1\}$. Construct $\mathcal{L}_1 = (V_1, E_1)$ with $V_1 = V_0 \cup V_{\mathcal{P}_1}$ and $E_1 = E_0 \cup E_{\mathcal{P}_1}$. 
	
	Next, select an agent $l \notin V_1$ that is 2-reachable from $\{1, 2\}$. There exist two disjoint paths from distinct agents $i_2, j_2 \in V_1$ to $l$, with all intermediate vertices distinct from $V_1$. The union of these paths with $l$ must contain a DEP $G_{\mathcal{P}_2} = (V_{\mathcal{P}_2}, E_{\mathcal{P}_2})$ with entry agents $\{i_2, j_2\}$ and $\ell_2 \ge 1$ inner agents labeled $\{|V_1| + 1, \dots, |V_1| + \ell_2\}$. Construct $\mathcal{L}_2 = (V_2, E_2)$ with $V_2 = V_1 \cup V_{\mathcal{P}_2}$ and $E_2 = E_1 \cup E_{\mathcal{P}_2}$.
	
	Repeat the above process until all agents in $V$ are included in some $\mathcal{L}_{\kappa}$. The resulting graph $\mathcal{L}_{\kappa}$ is a DEP-induced subgraph by Definition~\ref{def:multi_dual_entry_graph}.
\end{apxproof}

\subsection{Proof of Lemma \ref{lem:equivalence}} \label{sec:proofLemEquivalence}
    By Definition~\ref{defnon-singularConfiguration}, condition (a) holds if the mapping $ [s^\top, \tau^\top]^\top \mapsto A [s^\top, \tau^\top]^\top = g $ is injective, i.e., $ \nullspace(A) = \{0\} $. For $ A \in \mathbb{R}^{3n \times 6} $, the rank-nullity theorem implies $ \nullspace(A) = \{0\} \iff \rank(A) = 6 $ (condition (b)). Since $ \varPi(\tilde{g}, \theta) = \image(A) $, we have $ \rank(A) = 6 \iff \dim(\varPi(\tilde{g}, \theta)) = 6 $ (condition (c)). Thus, conditions (a), (b), and (c) are equivalent. When $\rank(A) < 6$, a singular configuration leads to ineffective parameters. Next, we prove the equivalence between condition (b) and condition (d) by establishing both implications.

	Since $I_n \otimes \varTheta$ is non-singular, we have
    \begin{equation}
        \rank(A) = \rank((I_n \otimes \varTheta) \cdot \bar{A}) = \rank(\bar{A}),
        \nonumber
    \end{equation}
	where $\bar{A} = 
	\begin{bmatrix}
	\diag(\tilde{g}_{1,\theta}) & \varTheta^\top \\
	\diag(\tilde{g}_{2,\theta}) & \varTheta^\top \\
	\vdots & \vdots \\
	\diag(\tilde{g}_{n,\theta}) & \varTheta^\top
	\end{bmatrix} \in \mathbb{R}^{3n \times 6}.$

\begin{proof}
[$(b) \Rightarrow (d)$]
    Suppose $\{\tilde{p}^x_{i,\theta}\}_{i \in V}$ is a singleton, i.e., $\tilde{p}_{i,\theta}^x = c$ for some constant $c$ and all $i$. Then the first column of $\bar{A}$ is a constant vector:
        \begin{equation}
            v_1 = [c,\ 0,\ 0,\ c,\ 0,\ 0,\ \dots,\ c,\ 0,\ 0]^\top.
            \nonumber
        \end{equation}
        This vector can be written as a linear combination of the 4th and 5th columns of $\bar{A}$, denoted as $v_4$ and $v_5$, respectively:
        \begin{equation}
            v_1 = -c\cos\theta\,v_4 + c\sin\theta\,v_5.
            \nonumber
        \end{equation}
        Hence, $v_1$ is linearly dependent on other columns, implying $\rank(\bar{A}) = \rank(A) < 6$. The same argument applies if $\{\tilde{p}_{i,\theta}^y\}_{i \in V}$ or $\{\tilde{\phi}_i\}_{i \in V}$ is a singleton. Therefore, all three sets must not be singletons.

$(d) \Rightarrow (b)$:
    Assume each of the sets $\{\tilde{p}^x_{i,\theta}\}_{i \in V}$, $\{\tilde{p}^y_{i,\theta}\}_{i \in V}$, and $\{\tilde{\phi}_i\}_{i \in V}$ is not a singleton. This implies the existence of distinct indices $i_k,j_k$ ($k=1,2,3$) such that:
    \begin{align}
        \tilde{p}^x_{i_1,\theta} \neq \tilde{p}^x_{j_1,\theta}, \quad 
        \tilde{p}^y_{i_2,\theta} \neq \tilde{p}^y_{j_2,\theta}, \quad 
        \tilde{\phi}_{i_3} \neq \tilde{\phi}_{j_3}.
        \nonumber
    \end{align}

    Next, apply rank-preserving operations to matrix $\bar{A}$: subtract row $ 3i_1-2 $ from $ 3j_1-2 $, row $ 3i_2-1 $ from $ 3j_2-1 $, and row $ 3i_3 $ from $ 3j_3 $. Consider the $6 \times 6$ submatrix with rows $ 3j_1-2 $, $ 3j_2-1 $, $ 3j_3 $, $ 3i_1-2 $, $ 3i_2-1$, $ 3i_3 $:
    \begin{equation}
        \begin{bmatrix}
            \diag([\tilde{p}^x_{j_1,\theta} - \tilde{p}^x_{i_1,\theta}, \tilde{p}^y_{j_2,\theta} - \tilde{p}^y_{i_2,\theta}, \tilde{\phi}_{j_3} - \tilde{\phi}_{i_3}]^{\top}) & 0 \\
            \diag([\tilde{p}^x_{i_1,\theta}, \tilde{p}^y_{i_2,\theta}, \tilde{\phi}_{i_3}]^{\top}) & \varTheta^\top \\
        \end{bmatrix}.
        \nonumber
    \end{equation}
    Since $\tilde{p}^x_{j_1,\theta} - \tilde{p}^x_{i_1,\theta}, \tilde{p}^y_{j_2,\theta} - \tilde{p}^y_{i_2,\theta}, \tilde{\phi}_{j_3} - \tilde{\phi}_{i_3} \neq 0$, and $\varTheta^\top$ is invertible, we have $\rank(\bar{A}) = \rank(A) = 6$.
\end{proof}

\subsection{Proof of Theorem \ref{the:rankL}} \label{sec:proofTheRankL}
We begin by introducing two lemmas.
\begin{lemma}[Diagonal Stability \cite{Anderson2009}, Theorem 3.2] \label{lem:DiagonalStability}
    Let $A$ be an $n \times n$ matrix whose all leading principal minors are nonzero. 
    Then, there exists a diagonal matrix $D$ such that every eigenvalue of $DA$ has a positive real part.
\end{lemma}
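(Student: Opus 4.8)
The plan is to prove this by induction on the matrix size $n$, constructing the diagonal matrix explicitly via a Schur‑complement/singular‑perturbation argument; an equivalent shortcut is to normalize signs and then cite the Fisher–Fuller theorem. For $n=1$ we have $A=[a]$ with $a\neq 0$, and $D=[\sgn(a)]$ gives $DA=[\,|a|\,]$, with single eigenvalue $|a|>0$. In general, write $\Delta_k$ for the $k\times k$ leading principal minor of $A$ (with $\Delta_0:=1$) and set $\epsilon_k:=\sgn(\Delta_k/\Delta_{k-1})$ and $E:=\diag(\epsilon_1,\dots,\epsilon_n)$. A telescoping computation gives, for the $k\times k$ leading principal submatrix, $\det\bigl((EA)_{1:k}\bigr)=(\epsilon_1\cdots\epsilon_k)\,\Delta_k=\sgn(\Delta_k)\,\Delta_k=|\Delta_k|>0$. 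Since any diagonal $D'$ placing the spectrum of $D'(EA)$ in the open right half‑plane yields $D:=D'E$ with $DA=D'(EA)$, we may henceforth assume all $\Delta_k>0$.

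For the inductive step, partition $A=\left[\begin{smallmatrix}A' & b\\ c^\top & d\end{smallmatrix}\right]$ with $A'\in\mathbb{R}^{(n-1)\times(n-1)}$ the $(n-1)$‑st leading principal submatrix; its leading principal minors are $\Delta_1,\dots,\Delta_{n-1}>0$, so the induction hypothesis supplies a diagonal $D'$ for which every eigenvalue of $D'A'$ has positive real part (in particular $D'A'$, hence $A'$, is invertible). The Schur complement $s:=d-c^\top (A')^{-1}b$ satisfies $\Delta_n=\Delta_{n-1}\,s$, so $s>0$. Take $D_\tau:=\diag(\tau D',1)$ with a parameter $\tau>0$. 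The block‑determinant identity gives
\begin{equation}
\det(\lambda I-D_\tau A)=\det(\lambda I-\tau D'A')\Bigl(\lambda-d-c^\top(\lambda I-\tau D'A')^{-1}\tau D' b\Bigr).
\end{equation}
As $\tau\to\infty$, the second factor converges uniformly on compact $\lambda$‑sets to $\lambda-s$, so exactly one eigenvalue of $D_\tau A$ tends to $s>0$; substituting $\lambda=\tau\nu$ and examining $\tau^{-n}\det(\tau\nu I-D_\tau A)\to \nu\,\det(\nu I-D'A')$ shows the remaining $n-1$ eigenvalues behave like $\tau$ times the eigenvalues of $D'A'$, hence have positive real part for $\tau$ large. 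Fixing $\tau$ beyond this threshold and setting $D=D_\tau$ (composed with $E$ to undo the normalization) completes the induction.

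The only delicate step is making this eigenvalue localization rigorous, i.e.\ showing that the $n-1$ ``fast'' eigenvalues and the single ``slow'' one all lie in the open right half‑plane simultaneously for every sufficiently large $\tau$. This follows from the displayed factorization together with continuity of polynomial roots in the coefficients: one application near $\lambda=s$ (where the factor $\det(\lambda I-\tau D'A')$ is nonvanishing for large $\tau$ since its roots $\tau\mu_i$ escape to infinity) isolates the slow eigenvalue, and one application to the rescaled polynomial in $\nu$, whose limiting roots are $\{0\}\cup\{\text{eigenvalues of }D'A'\}$, pins the fast eigenvalues near $\tau\mu_i$. Alternatively, once the minors have been normalized to be positive, one can invoke the Fisher–Fuller theorem directly: it yields a positive diagonal $D'$ making the spectrum of $D'(EA)$ real and positive, hence in particular contained in the open right half‑plane.
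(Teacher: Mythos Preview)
Your argument is correct. The sign normalization via $E=\diag(\epsilon_1,\dots,\epsilon_n)$ is valid because $(EA)_{1:k}=E_{1:k}A_{1:k}$ when $E$ is diagonal, so the telescoping computation of the leading minors goes through as you wrote. The inductive step via the Schur complement and the singular-perturbation parameter $\tau$ is the standard route: the block-determinant identity you display is right, the limit of the ``slow'' factor is indeed $\lambda-s$ (since $(\lambda I-\tau D'A')^{-1}\tau D'\to -(A')^{-1}$ for fixed $\lambda$), and the rescaling $\lambda=\tau\nu$ correctly picks out the fast eigenvalues. Your remark that the localization is made rigorous by continuity of polynomial roots is the right way to close the argument; the only care needed is that the root of the limiting $\nu$-polynomial at $\nu=0$ is precisely the slow eigenvalue already accounted for in the $\lambda$-scale, so no sign can leak there.

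The paper, however, does not prove this lemma at all: it is quoted verbatim as Theorem~3.2 of the cited reference and used as a black box. So there is no ``paper's own proof'' to compare against. What you have supplied is essentially the classical Fisher--Fuller argument (which you also invoke as a shortcut), and it is a perfectly good self-contained justification of the lemma.
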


\begin{lemma}\label{lem:PathG} 
Consider a nominal formation $(G_{\mathcal{P}_h},\tilde{g}, \theta)$ in $\mathbb{R}^2$, where $G_{\mathcal{P}_h} = (V_{\mathcal{P}_h}, E_{\mathcal{P}_h})$ is a DEP graph with entry agents $\{i_h, j_h\}$ and inner agents locally labeled $\{1, \dots, \ell_h\}$ as defined in Definition~\ref{def:dual_entry_path}. Then:

\begin{enumerate}
    \item \label{lem:PathG-non-singular}
    The matrix $\hat{M}_{ff}$ is non-singular if and only if the condition in \eqref{eq nondegeneracy condition} is satisfied.

    \item \label{lem:PathG-stability}
    Under conditions \eqref{eq nondegeneracy condition} and \eqref{eq stabilitycondition}, there exists a diagonal matrix $D$ such that every eigenvalue of $D\hat{M}_{ff}$ has a positive real part.
\end{enumerate}
\end{lemma}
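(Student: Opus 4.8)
\textbf{Proof proposal for Lemma~\ref{lem:PathG}.}

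The plan is to exploit the linear, path-like structure of a single DEP so that the constraint system \eqref{eq follower} restricted to $G_{\mathcal{P}_h}$ becomes a \emph{block-bidiagonal} (or near-triangular) system in the inner agents $1,\dots,\ell_h$. First I would write out $\hat{M}_{ff}$ explicitly for the DEP. By Definition~\ref{def:dual_entry_path}, inner agent $1$ has the single constraint coming from the triple $(i_h, 2, 1)$ (for $\ell_h \geq 2$) or $(i_h, j_h, 1)$ (for $\ell_h = 1$), inner agent $k$ (for $1 < k < \ell_h$) has the constraint from $(k-1, k+1, k)$, and inner agent $\ell_h$ has the constraint from $(\ell_h - 1, j_h, \ell_h)$. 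Because each inner agent $k$ only appears as the ``center'' $k$ in exactly one constraint and as an endpoint in the constraints of $k-1$ and $k+1$, the matrix $\hat{M}_{ff}$ after the coordinate change $\diag(\varTheta^\top)$ is block-tridiagonal with $3\times3$ diagonal blocks built from the $w$-matrices. Moreover, since the coefficient of agent $k$ in \emph{its own} constraint is $w_{(k-1)(k+1)}$-type (a difference that is nonzero under \eqref{eq nondegeneracy condition}), I would argue that $\hat{M}_{ff}$ is actually lower-block-triangular up to a harmless reordering, or at least that its determinant telescopes.

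\textbf{Part 1 (non-singularity).} The key computation is $\det(\hat{M}_{ff})$. Using the block-bidiagonal structure, I expect $\det(\hat{M}_{ff})$ to factor (up to sign) as a product over the edges in $E_u$ of the scalar quantities $\tilde{p}^x_{uv,\theta}\,\tilde{p}^y_{uv,\theta}\,\tilde{\phi}_{uv}$, plus the contribution of the entry pair $\{i_h,j_h\}$. Concretely, each diagonal block contributes a $\diag$ of relative-coordinate differences, and the off-diagonal (bidirectional-edge) blocks only shift the triangular structure without affecting the determinant. So $\det(\hat{M}_{ff}) \ne 0$ if and only if every factor $\tilde{p}^x_{uv,\theta}\tilde{p}^y_{uv,\theta}\tilde{\phi}_{uv}$ is nonzero, which is exactly \eqref{eq nondegeneracy condition}. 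The forward direction ($\hat{M}_{ff}$ non-singular $\Rightarrow$ \eqref{eq nondegeneracy condition}) follows because if any factor vanishes the determinant vanishes; the converse is the same factorization read the other way.

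\textbf{Part 2 (stability).} Here I would invoke Lemma~\ref{lem:DiagonalStability} (diagonal stability). Since $\diag(\varTheta^\top)$ is a similarity that commutes with the block structure, it suffices to find a diagonal $D$ making every eigenvalue of $D\hat{M}_{ff}$ have positive real part. By Lemma~\ref{lem:DiagonalStability}, such a $D$ exists provided all leading principal minors of $\hat{M}_{ff}$ (in a suitable basis ordering) are nonzero. So the task reduces to: reorder rows/columns of $\hat{M}_{ff}$ (grouping by coordinate $x$, $y$, $\phi$ and ordering inner agents $1,\dots,\ell_h$), and show that each leading principal minor is a partial product of the same nonzero factors, now using the extra hypothesis \eqref{eq stabilitycondition} ($\prod_{l=2}^{\ell_h}\tilde{p}^x_{i_h l,\theta}\tilde{p}^y_{i_h l,\theta}\tilde{\phi}_{i_h l}\ne 0$) to kill any minor that would otherwise involve a ``$\tilde{p}_{i_h l}$''-type difference rather than an adjacent-edge difference. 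Once all leading principal minors are nonzero, Lemma~\ref{lem:DiagonalStability} delivers $D$.

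\textbf{Main obstacle.} The delicate point is bookkeeping: verifying that after the coordinate permutation the matrix $\hat{M}_{ff}$ really is (block-)triangular with the claimed diagonal, so that both the full determinant and \emph{all} leading principal minors factor cleanly into the scalar differences appearing in \eqref{eq nondegeneracy condition} and \eqref{eq stabilitycondition}. The DEP is ``almost'' a path, but the two entry agents $i_h$ and $j_h$ enter at opposite ends, and inner agent~$1$'s constraint uses agent $2$ while inner agent $\ell_h$'s constraint uses $j_h$; handling these boundary blocks correctly — and confirming that \eqref{eq stabilitycondition} is exactly what is needed for the intermediate minors (as opposed to \eqref{eq nondegeneracy condition}, which governs the full determinant) — is where the real care is required. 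The small cases $\ell_h = 1, 2$ worked out in Theorem~\ref{the_dsm} serve as a sanity check on the general pattern.
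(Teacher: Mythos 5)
Your high-level strategy coincides with the paper's: decouple $\hat{M}_{ff}$ by row/column permutations into three scalar blocks (one per coordinate $x$, $y$, $\phi$), compute the determinant and the leading principal minors of each block, and invoke Lemma~\ref{lem:DiagonalStability} for Part~2. However, there is a genuine gap at the central step. Each per-coordinate block (see \eqref{equPathM}) is \emph{tridiagonal}, not triangular, and its diagonal entry in the row of inner agent $k$ is the \emph{skip} difference $\tilde{\phi}_{(k-1)(k+1)}$ between the two path-neighbors of $k$, not an edge difference. Consequently the determinant is not the product of the diagonal entries, and your claim that "the off-diagonal blocks only shift the triangular structure without affecting the determinant" fails: already for two inner agents one gets $f_2=\phi_{35}\phi_{14}-\phi_{54}\phi_{31}$, which equals the claimed edge product $\phi_{34}\phi_{15}$ only after the cancellation $\phi_{35}=\phi_{34}+\phi_{45}$. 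No single transversal of entries produces the factorization; the identity
\[
\det(\hat{M}^{\phi}_{ff})=\tilde{\phi}_{34}\tilde{\phi}_{45}\cdots\tilde{\phi}_{(n-1)n}\tilde{\phi}_{12}
\]
is obtained in the paper by induction on the three-term recurrence $f_i=m_{ii}f_{i-1}-m_{i(i-1)}m_{(i-1)i}f_{i-2}$ for tridiagonal determinants. You correctly flag this bookkeeping as the "main obstacle," but the shortcut you propose in its place does not work, and without the recurrence argument Part~1 is unproven in both directions.

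The same omission affects Part~2. You correctly guess that \eqref{eq stabilitycondition} is what controls the \emph{intermediate} leading principal minors while \eqref{eq nondegeneracy condition} controls the full determinant, but the justification requires the explicit evaluation $f_k=\tilde{\phi}_{34}\cdots\tilde{\phi}_{(k+1)(k+2)}\,\tilde{\phi}_{1(k+3)}$ (in the paper's relabelled indices, where agent $1$ is the entry agent $i_h$ and agents $3,\dots,n$ are the inner agents). Only this formula shows that the extra factors beyond \eqref{eq nondegeneracy condition} are exactly the differences $\tilde{\phi}_{i_h l}$, $l=2,\dots,\ell_h$, of \eqref{eq stabilitycondition}, so that Lemma~\ref{lem:DiagonalStability} applies to each block and the three diagonal gains can be assembled (after undoing the permutations, which preserve diagonality and the spectrum) into the desired $D$. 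To complete your proof you would need to carry out the recurrence computation for the determinant and for every leading principal minor.
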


\begin{proof}
    Let $g_{\theta}=\diag(\varTheta^{\top}) g=[\cdots,g_{i,\theta}^{\top},\cdots]^{\top}$, where $g_{i,\theta}=\varTheta^{\top} g_i=[p^x_{i,\theta}, p^y_{i,\theta},\phi_i]^{\top}$. Define stacked vectors $p^x_{\theta}=[\cdots,p^x_{i,\theta},\cdots]^{\top}$ and $p^y_{\theta}=[\cdots,p^y_{i,\theta},\cdots]^{\top}$. Since each constant value matrix block $w_{ij}$ defined in \eqref{equ_vvc} is a diagonal matrix, there exist a row permutation matrix $Q$ and a column permutation matrix $P$ such that 
    \begin{equation} \label{equ_de}
    \begin{split}
        &Q \hat{M}_{f} P= Q \hat{M}_{f} \begin{bmatrix}
        P_{ll} & 0  \\
        0 & P_{ff} 
        \end{bmatrix} \\
        &=\begin{bmatrix}
        \hat{M}_{fl}^x & 0  & 0 & \hat{M}_{ff}^x & 0 & 0 \\
        0 & \hat{M}_{fl}^y & 0 & 0  & \hat{M}_{ff}^y & 0\\
        0 & 0 & \hat{M}_{fl}^\phi & 0 & 0 & \hat{M}_{ff}^\phi
        \end{bmatrix}.
    \end{split}
    \end{equation}
    In other words, $\hat{M}_f$ can be decomposed into independent constraint matrices for each state component, we have
    \begin{equation}
    \begin{cases}
    &\hat{M}_f^x p^x_{\theta} = 0, \\
    &\hat{M}_f^y p^y_{\theta} = 0, \\
    &\hat{M}_f^{\phi} \phi = 0,
    \end{cases}
    \nonumber
    \end{equation}
    where the matrices $\hat{M}_f^x = [\hat{M}_{fl}^x, \hat{M}_{ff}^x]$, $\hat{M}_f^y = [\hat{M}_{fl}^y, \hat{M}_{ff}^y]$, and $\hat{M}_f^\phi = [\hat{M}_{fl}^\phi, \hat{M}_{ff}^\phi]$ are partitioned according to the leader-follower structure. From \eqref{equ_de}, it holds that
    
    \begin{equation}\label{eq mff}
       Q\hat{M}_{ff} P_{ff} = \begin{bmatrix}
        \hat{M}_{ff}^x & 0 & 0 \\
        0  & \hat{M}_{ff}^y & 0\\
        0 & 0 & \hat{M}_{ff}^\phi
        \end{bmatrix}. 
    \end{equation}

    \textit{Proof of Part \ref{lem:PathG-non-singular})}:
    From \eqref{eq mff}, we have
    \begin{equation}
        \rank(\hat{M}_{ff}) =\rank(\hat{M}^x_{ff}) + \rank(\hat{M}^y_{ff}) + \rank(\hat{M}^{\phi}_{ff}).
        \nonumber
    \end{equation}
    
    Consequently, $\hat{M}_{ff}$ is non-singular if and only if $\hat{M}^x_{ff}$, $\hat{M}^y_{ff}$, and $\hat{M}^{\phi}_{ff}$ are all non-singular. Next, we establish the conditions under which $\hat{M}^{\phi}_{ff}$ is non-singular.

    To simplify the notation, we adopt simplified indices by mapping the original agent labels $i_h,j_h,1,2,\cdots, \ell_h$ to consecutive integers $1, 2, 3, \cdots, n$. Under this notation, the matrix $\hat{M}_f^{\phi}$ takes the following form:
    \begin{equation}
        \begin{aligned}
        &\hat{M}_f^{\phi}=\left[ \begin{array}{c|c}
        \hat{M}^{\phi}_{fl} & \hat{M}^{\phi}_{ff}
        \end{array} \right] = \\
        & \left[ \begin{array}{cc|ccccc}
        \phi_{43} & 0            & \phi_{14} & \phi_{31} & 0         & \cdots    & 0      \\
        0         & 0            & \phi_{54} & \phi_{35} & \phi_{43} & \ddots    & \vdots \\
        0         & 0            & 0         & \phi_{65} & \phi_{46} & \ddots    & 0      \\
        \vdots    & \vdots       & \vdots    & \ddots    & \ddots    & \ddots    & \phi_{n_1n_2} \\
        0         & \phi_{nn_1}  & 0         & \cdots    & 0         & \phi_{2n} & \phi_{n_12} \rule{0pt}{0.5cm}
        \end{array} \right]
        \end{aligned},
        \nonumber
    \end{equation}
    where $\hat{M}^{\phi}_{fl} \in \mathbb{R}^{(n-2) \times 2}$, $\hat{M}^{\phi}_{ff} \in \mathbb{R}^{(n-2) \times (n-2)}$, $\phi_{ij} = \phi_i - \phi_j$ and $n_i$ is an abbreviation for $n-i$. 

    It is clear that $\hat{M}^{\phi}_{ff}=[m_{ij}]$ is a tridiagonal matrix. Let $f_0=1$, $f_1 = \det([\phi_{14}])=\phi_{14}$, and $f_{n-2} = \det(\hat{M}^{\phi}_{ff})$. According to \cite[Theorem 2.1]{Moawwad2004}, $\det(\hat{M}^{\phi}_{ff})$ can be computed from a three-term recurrence relation 
    \HT{
    \begin{equation}
        f_i =m_{ii} f_{i-1} - m_{i(i-1)}m_{(i-1)i} f_{i-2}, i=2,3,\cdots,n-2,
        \nonumber
    \end{equation}   
    where $f_i$ denotes leading principal minor of order $i$. Next, we prove this result by induction. For the base case \(i=2\):
    \begin{equation}
    \begin{split}
    f_2 &=\phi_{35} f_1 - \phi_{54}\phi_{31} f_0 \\
    &=(\phi_{34} + \phi_{45}) f_1 - \phi_{54}\phi_{31} \\
    &=\phi_{34}\phi_{14} + \phi_{45}(\phi_{14} + \phi_{31}) \\
    &=\phi_{34}(\phi_{14} + \phi_{45}) \\
    &=\phi_{34}\phi_{15}.
    \end{split}
    \nonumber
    \end{equation}

    Assume that 
    \[
    \begin{aligned}
    f_{n-4} &= \phi_{34}\phi_{45}\cdots\phi_{(n-3)(n-2)}\phi_{1(n-1)},\\
    f_{n-3} &= \phi_{34}\phi_{45}\cdots\phi_{(n-2)(n-1)}\phi_{1n}.
    \end{aligned}
    \]
    Then we have}
    \begin{equation}
    \begin{split}
    f_{n-2} &=\phi_{(n-1)2} f_{n-3} - \phi_{2n}\phi_{(n-1)(n-2)} f_{n-4} \\
    &=\phi_{(n-1)n}f_{n-3}+ \phi_{n2}(f_{n-3} +\phi_{(n-1)(n-2)} f_{n-4}) \\
    &=\phi_{(n-1)n}f_{n-3} + \phi_{n2}(\phi_{34}\phi_{45}\cdots\phi_{(n-2)(n-1)}\phi_{1n}\\
    &+\phi_{34}\phi_{45}\cdots\phi_{(n-3)(n-2)}\phi_{1(n-1)}\phi_{(n-1)(n-2)}) \\
    &=\phi_{(n-1)n}f_{n-3} + \phi_{34}\phi_{45}\cdots\phi_{(n-1)n}\phi_{n2} \\
    &=\phi_{34}\phi_{45}\cdots\phi_{(n-1)n}\phi_{12}.
    \end{split}
    \nonumber
    \end{equation}
    So, $\phi_{34}\phi_{45}\cdots\phi_{(n-1)n}\phi_{12} \ne 0 \Longleftrightarrow \det(\hat{M}^{\phi}_{ff}) \ne 0 \Longleftrightarrow \rank(\hat{M}^{\phi}_{ff}) = n-2 \Longleftrightarrow \hat{M}^{\phi}_{ff}$ is non-singular.
    \HT{
    Similar to the above proof, we conclude that $\hat{M}^x_{ff}$ and $\hat{M}^y_{ff}$ are non-singular if and only if 
    \begin{equation}
        \tilde{p}^x_{34,\theta} \tilde{p}^x_{45,\theta} \cdots \tilde{p}^x_{(n-1)n,\theta}\tilde{p}^x_{12,\theta} \ne 0
        \nonumber
    \end{equation}
    and
    \begin{equation}
        \tilde{p}^y_{34,\theta} \tilde{p}^y_{45,\theta} \cdots \tilde{p}^y_{(n-1)n,\theta}\tilde{p}^y_{12,\theta} \ne 0.
        \nonumber
    \end{equation}}

    \textit{Proof of Part \ref{lem:PathG-stability})}:
    Since $Q \hat{M}_{ff} P_{ff}$ is block-diagonal as shown in \eqref{eq mff}, we analyze the submatrices $\hat{M}_{ff}^x$, $\hat{M}_{ff}^y$, and $\hat{M}_{ff}^\phi$. By Lemma~\ref{lem:DiagonalStability}, for each submatrix (e.g., $\hat{M}_{ff}^x$), there exists a diagonal matrix $D^x$ such that every eigenvalue of $D^x \hat{M}_{ff}^x$ has a positive real part if its all leading principal minors are nonzero. The same applies to $\hat{M}_{ff}^y$ and $\hat{M}_{ff}^\phi$ with diagonal matrices $D^y$ and $D^\phi$, respectively. Construct $D = \diag(D^x, D^y, D^\phi)$, which is diagonal and ensures that every eigenvalue of $D Q \hat{M}_{ff} P_{ff} = \diag(D^x \hat{M}_{ff}^x, D^y \hat{M}_{ff}^y, D^\phi \hat{M}_{ff}^\phi)$ has a positive real part, since every eigenvalue of each block has a positive real part. Let $D' = DQ$, we note that since $Q$ is a permutation matrix and $D$ is diagonal, $D'$ remains diagonal. Furthermore, $D Q \hat{M}_{ff} P_{ff}$ and $D'\hat{M}_{ff}$ share identical eigenvalues because $P_{ff}$ is also a permutation matrix.
    
    Next, we establish the conditions under which the leading principal minors of $\hat{M}_{ff}^x$, $\hat{M}_{ff}^y$, and $\hat{M}_{ff}^\phi$ are nonzero.

    From the proof of Part 1), all leading principal minors of $\hat{M}^{\phi}_{ff}$ are distinct from zero $\Longleftrightarrow f_1 \ne 0 \wedge f_2 \ne 0 \wedge \cdots \wedge f_{n-2} \ne 0 \Longleftrightarrow \phi_{14} \ne 0 \wedge \phi_{34}\phi_{15} \ne 0 \wedge \cdots \wedge \phi_{34}\phi_{45}\cdots\phi_{(n-1)n}\phi_{12} \ne 0 \Longleftrightarrow \phi_{34}\phi_{45}\cdots\phi_{(n-1)n}\phi_{14}\phi_{15}\cdots \phi_{1(n-1)} \phi_{1n} \phi_{12} \ne 0$. These conditions guarantee that $\hat{M}_{ff}^\phi$ has full rank and its leading principal minors are nonzero. The corresponding conditions for $\hat{M}_{ff}^x$ and $\hat{M}_{ff}^y$ follow similarly.
\end{proof}

\begin{proof}[Proof of Theorem \ref{the:rankL}]
    \textit{(Sufficiency)} According to Definition \ref{def:multi_dual_entry_graph}, the matrix $\hat{M}_f$ of DEP-induced graph $\mathcal{L}_{\kappa}$ takes the following form:
    \begin{equation} \label{equ_mf}
    \hat{M}_f =[\hat{M}_{fl} \, \hat{M}_{ff}]=\left[  \begin{array}{c|ccccc}
    \hat{M}_{fl}^1  & \hat{M}_{ff}^1  & 0      & \cdots & 0      \\
    *      & *      & \hat{M}_{ff}^2  & \ddots & \vdots \\
    \vdots & \vdots & \ddots & \ddots & 0      \\
    *      & *      & \cdots & *      & \hat{M}_{ff}^{\kappa}
    \end{array} \right] ,
    \end{equation}
    where $\hat{M}_{fl} \in \mathbb{R}^{(3n-6)\times6}$, $\hat{M}_{ff} \in \mathbb{R}^{(3n-6)\times(3n-6)}$, and $\hat{M}_{ff}^h, h \in \{1,2,\dots, \kappa\}$ are the corresponding blocks of the DEP graph $G_{\mathcal{P}_h}$. If $G$ satisfies Assumption \ref{ass_2rootrank}, by applying Lemma \ref{lem:PathG}, we have $\rank(\hat{M}_{ff}^h) = 3|V_{\mathcal{P}_h}|-6$. Considering the particular structure of $\hat{M}_f$, we know that
    \begin{equation}
    \rank(\hat{M}_{ff}) = \sum\limits_{h=1}^{\kappa} \rank(\hat{M}_{ff}^h) = 3n-6.
    \nonumber
    \end{equation}
    Thus $\hat{M}_{ff}$ is square and full rank, hence non-singular.

    \textit{(Necessity)} Suppose $G$ is not 2-rooted, implying that the removal of a particular agent results in some agents becoming unreachable from the root subset. For the sake of argument, assume that upon removing agent $i$, there emerges a subset $U$ comprising $i-1$ agents that are disconnected from all roots, and a complementary set $\bar{U}$ consisting of $n-i$ agents that remain accessible from at least one root. We can reindex the agents in $U$ as $1, \ldots, i-1$ and those in $\bar{U}$ as $i+1, \ldots, n$. Then $\hat{M}^u_f$ associated with $U$ adopts the following structure:
    \begin{equation}
    \left[
    \begin{array}{ccccccc}
    \hat{M}_{uu} & \hat{M}_{ui} & 0 \\
    \end{array}
    \right],
    \nonumber
    \end{equation}
    where $\hat{M}_{uu} \in \mathbb{R}^{(3i-3) \times (3i-3)}$ and $\hat{M}_{ui} \in \mathbb{R}^{(3i-3) \times 3}$. Denote the relabeled $g$ by $[g_{\alpha}^{\top}, g_{\beta}^{\top}]^{\top}$ where $g_{\alpha} \in \mathbb{R}^{3i \times 1}$ and $g_{\beta} \in \mathbb{R}^{3(n-i) \times 1}$. By the definition of $\hat{M}_f$ and Lemma \ref{lem:2}, we have
    \begin{equation}
    [\hat{M}_{uu} \, \hat{M}_{uk}] \diag(\varTheta^{\top}) \left( (I_i \otimes S) g_{\alpha} + 1_i \otimes \tau \right) = 0.
    \nonumber
    \end{equation}
    This implies $\rank([\hat{M}_{uu} \, \hat{M}_{ui}]) < 3i-3$, hence $[\hat{M}_{uu} \, \hat{M}_{ui} \, 0]$ is not of full row rank. Consequently, $\hat{M}_f$ is not of full row rank, which entails that $\hat{M}_{ff}$ is singular. This contradicts the statement that $\hat{M}_{ff}$ is non-singular. Therefore, $G$ is 2-rooted. According to Lemma \ref{lem:mp2rg_equivalence}, $G$ contains a spanning DEP-induced graph $\mathcal{L}_{\kappa}$. By \eqref{equ_mf} and Lemma \ref{lem:PathG}, we conclude that if $\hat{M}_{ff}$ is non-singular, then the nominal formation $(G, \tilde{g}, \theta)$ must satisfy Assumption \ref{ass_2rootrank}. The proof is completed.
\end{proof}
\HT{
\subsection{Proof of Theorem \ref{th convergence}} \label{sec:proofTheconvergence}
\begin{apxproof}
    Substituting (\ref{equ_sluf}) into $\dot{\delta}_f$ gives
    \begin{equation}
    \begin{split}
    \dot{\delta}_f &= (\hat{M}_{ff} \diag(\varTheta^{\top}))^{-1} \hat{M}_{fl} \diag(\varTheta^{\top}) \dot{g}^*_l + \dot{g}_f \\
    &=-\diag(\varTheta) D \hat{M}_{ff} \diag(\varTheta^{\top}) \delta_f.
    \end{split}
    \nonumber
    \end{equation}
    
    We first establish that under Assumptions \ref{ass_2rootrank} and \ref{ass_2rootstab}, there exists a diagonal matrix $D$ such that every eigenvalue of $D\hat{M}_{ff}$ has a positive real part.

    From \eqref{equ_mf}, the spectrum of $\hat{M}_{ff}$ is determined by its block diagonal components $\hat{M}_{ff}^h$ (where $h \in {1,2,...,\kappa}$), each corresponding to the DEP graph $G_{\mathcal{P}_h}$. Under Assumptions \ref{ass_2rootrank} and \ref{ass_2rootstab}, Lemma \ref{lem:PathG} guarantees that for each $\hat{M}_{ff}^h$, there exists a diagonal $D^h$ such that $\sigma(D^h\hat{M}_{ff}^h)$ has eigenvalues with positive real parts, where $\sigma(\cdot)$ denotes the spectrum.
    
    Taking $D = \diag( D^h )$ yields that $D\hat{M}_{ff}$ has the spectrum $\bigcup_{k=1}^\kappa \sigma(D^h\hat{M}_{ff}^h)$. Since all eigenvalues within each block have positive real parts, and blocks correspond to different path graphs, the combined spectrum maintains these properties.

    Next, since $\diag(\varTheta)$ is non-singular, the matrices $D \hat{M}_{ff}$ and $\diag(\varTheta) D \hat{M}_{ff} \diag(\varTheta^{\top})$ share the same eigenvalues. Consequently, all eigenvalues of $-\diag(\varTheta) D \hat{M}_{ff} \diag(\varTheta^{\top})$ lie in the open left half plane. This implies that the tracking error $\delta_f$ converges to zero globally and exponentially. 
\end{apxproof}
}

\bibliographystyle{IEEEtran}
\bibliography{references} 

\begin{IEEEbiography}[{\includegraphics[width=1.2in,height=1.35in,clip,keepaspectratio]{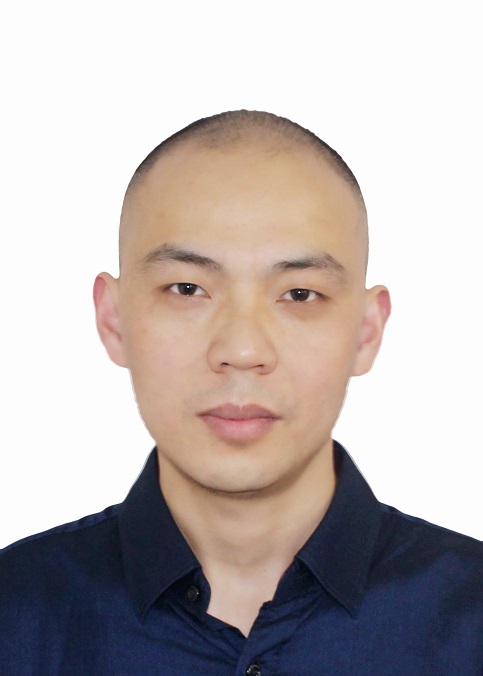}}]{Tao He} received the B.S. degree in ‌Electronic and Information Engineering from Chongqing University, Chongqing, China, in 2009 and his M.S. degree in Computer Science from the University of Electronic Science and Technology of China, Chengdu, China, in 2023. He is currently pursuing the Ph.D. degree in the School of Automation, Chongqing University, Chongqing, China. His research interests include cooperative control and motion planning for multi-agent systems.
\end{IEEEbiography} 

\begin{IEEEbiography}[{\includegraphics[width=1in,height=1.25in,clip,keepaspectratio]{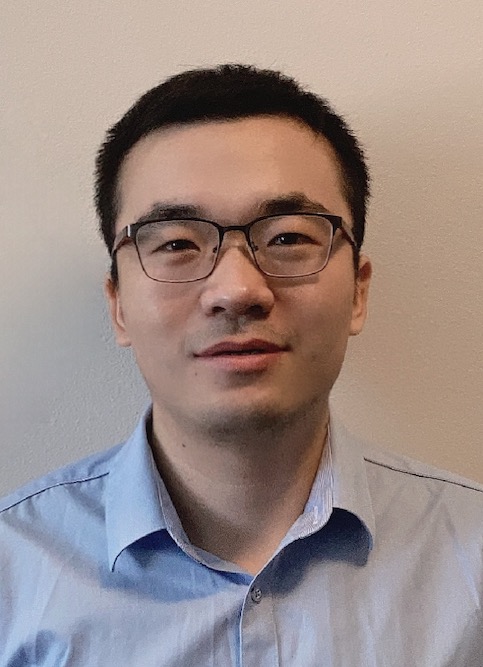}}]{Gangshan Jing} received the Ph.D. degree in Control Theory and Control Engineering from Xidian University, Xi'an, China, in 2018. From 2016-2017, he was a research assistant at Hong Kong Polytechnic University. From 2018 to 2019, he was a postdoctoral researcher at Ohio State University. From 2019 to 2021, he was a postdoctoral researcher at North Carolina State University. Since 2021 Dec., he has been a professor with the School of Automation, Chongqing University. His research interests include cooperative control, optimization, and learning for network systems.
\end{IEEEbiography}

\end{document}